\documentclass[11pt]{amsart} 
\usepackage{amsaddr}

\usepackage{amsmath,amssymb}
\usepackage{amsthm}
\usepackage{physics}
\usepackage[alphabetic]{amsrefs}
\usepackage{indentfirst}
\usepackage{mathrsfs}
\usepackage{graphicx}
\usepackage{hyperref}
\usepackage{xcolor}
\usepackage{fourier}
\usepackage{geometry}
\theoremstyle{plain}
\newtheorem{theorem}{Theorem}
\newtheorem{lemma}{Lemma}

\newtheorem{prop}[theorem]{Proposition}
\newtheorem{coro}[theorem]{Corollary}

\usepackage{cleveref}

\usepackage{algorithm}
\usepackage{algpseudocode}

\theoremstyle{definition}

\theoremstyle{remark}
\newtheorem{remark}{Remark}

\newcommand{\bd}[1]{\boldsymbol{#1}}
\newcommand{\wt}[1]{\widetilde{#1}}

\newcommand{\veps}{\varepsilon}

\newcommand{\dt}{\Delta\!t}
\newcommand{\dH}{\delta\!H}
\newcommand{\dsp}{\displaystyle}

\title{A Partially Random Trotter Algorithm for Quantum Hamiltonian Simulations}
\author{Shi Jin} 
\address{School of Mathematical Sciences, Institute of Natural Sciences, MOE-LSEC and SHL-MAC, Shanghai Jiao Tong University, Shanghai, China (shijin-m@sjtu.edu.cn)}

\author{Xiantao Li}
\address{Department of Mathematics, Pennsylvania State University, University Park, PA 16802, USA (Xiantao.Li@psu.edu)}

\keywords{Trotter splitting, quantum simulation, Unitary dynamics}
\date{\today}
\begin{document}
\maketitle
\begin{abstract}
Given the Hamiltonian, the evaluation of unitary operators has been at the heart of many quantum algorithms. Motivated by 
existing deterministic and random methods, we present a hybrid approach, where Hamiltonians  with large amplitude are evaluated at each time step, while
the remaining terms are evaluated at random. The bound for the mean square error is obtained, together with a concentration bound. The mean square error consists of a variance term and  a bias term, arising respectively from the random sampling of the Hamiltonian terms and the operator splitting error. Leveraging on the bias/variance trade-off, the error can be minimized by balancing the two. The concentration bound provides  an estimate on the number of gates. The estimates are verified by using numerical experiments on classical computers. 
\end{abstract}

\section{Introduction}
There has been rapidly growing interest in quantum computing algorithms in quantum chemistry, where classical algorithms are often limited to small systems due to the dimension of the Fock space. At the core of many algorithms is the approximation of unitary operators \cite{cleve1998quantum,nielsen2002quantum}. Together with the phase estimation method, the unitary operators can be used to compute the ground state \cite{aspuru-guzik_simulated_2005}. 
More importantly, the ability to approximate unitary operators is crucial in the development of quantum algorithms for many other scientific-computing problems, including solutions of linear systems \cite{harrow2009quantum}, singular-value decomposition  \cite{gilyen2019quantum}, Monte-Carlo methods \cite{montanaro2015quantum}, open quantum systems \cite{childs2016efficient,cleve2016efficient}, etc.

Meanwhile, quantum algorithms have to face the challenge that the number of second quantization terms in the  Hamiltonian, which have to be treated separately,  scale unfavorably with respect to the number of orbitals. 
Existing methods, based on how the Hamiltonian terms are evaluated, can be divided into deterministic and random methods. Deterministic methods mostly rely on the Trotter-Suzuki splitting, where at each step, the unitary operator is split into the product of unitary operators, each of which only involves one Hamiltonian term. Alternatively, one can use a linear combination of unitary operators to improve the accuracy \cite{childs2012hamiltonian}. The accuracy of such methods, as well as the gate counts, have been extensively studied \cite{childs2021theory}.  Aother interesting approach, called quantum stochastic drift protocol (QDRIFT), is  first proposed by Campbell \cite{campbell2019random}, and it is stochastic in nature: At each step,  only one Hamiltonian is selected, with probability proportional to its amplitude (important sampling). It was shown that the method leads to good accuracy with high probability. Later Chen et al.  \cite{chen2020quantum} provided thorough analysis and obtained tighter error bounds and improved estimates of gate counts.

One way to compare the deterministic and random frameworks is to examine the accuracy while holding the number of gates fixed. Intuitively, due to the large number of Hamiltonian terms, deterministic methods have to be implemented with a much larger step size. Meanwhile, since only one unitary operator is evaluated at each time step, a random algorithm can be implemented with much smaller step size. However, due to the stochastic nature, the error is dominated by the variance, which is also dependent of the system size.  Motivated by these observations, we consider a hybrid algorithm, where the Hamiltonian terms with large amplitude are evaluated deterministically.  Meanwhile, the remaining terms are implemented at random. By excluding the large terms, the variance is significantly reduced. We allow the flexibility of applying multiple unitary operators at a random step.  We prove an error bound that incorporates the two error contributions. In particular, the error bound can be interpreted as the typical bias-variance trade-off in standard statistical analysis \cite{ramsay2002applied}. Therefore, the error can be minimized by balancing the two.  In addition, concentration inequalities are derived to provide an estimate of the gate count.


A similar, but separate development is the random batch methods (RBM) for simulating particle dynamics with two- and many-body interactions \cite{jin2021convergence}.  RBM has been applied to both classical \cite{jin2020random,jinLiXu2020random} and quantum systems \cite{GJP,jinli2020random}, see \cite{jin2021random}
for a recent review.  The separation of the Hamiltonian operators is motivated by the implementation of RBM in the presence of strong repulsive forces \cite{li2020random,jinLiXu2020random,jinli2020random}, in which case separating out these terms makes the time integration more accurate.  For example, in \cite{li2020random}, the inter-molecular potential is decomposed into a long range but smooth part,  and a short range potential with singularity. The long range interactions can be efficiently sampled using the RBM, followed by a Metropolis rejection step for the short range potential. In the context of quantum Monte Carlo (QMC) method, this decomposition is applied to the pairwise interactions and the two-body terms in the Jastrow factor \cite{jinli2020random}.  

Another interesting idea is to randomly order the Hamiltonian terms in the Trotter splitting algorithm. This has been implemented in \cite{tranter2018comparison,tranter2019ordering}, and recently analyzed by Childs et al. \cite{childs2019faster}. Faehrmann et al. \cite{faehrmann2021randomizing} proposed to randomize the multi-product formula to improve the sampling of an observable.  The main difference from the random algorithms considered in \cite{campbell2019random,chen2020quantum} is that the random ordering is equivalent to a random algorithm  using a uniform distribution among the Hamiltonians, rather than important sampling. In this work, we will consider both choices for the random algorithm. We will show that the important sampling generally gives a smaller variance and a lower gate count.

The rest of the paper is organized as follows. Section \ref{sec: rand} presents algorithms where some of the Hamiltonian terms are evaluated at random. We prove bounds for the mean square error in terms of the step size. In Section \ref{sec: gate}, we present concentration bounds that provide estimates for the number of gates required to achieve certain accuracy. We discuss the variance/bias balance in the mean square error in Section \ref{sec: partial}, as well as their implications to the partition of the Hamiltonians. To verify the error estimates, we present numerical experiments in Section \ref{sec: tests}.   

\section{Random Algorithms}\label{sec: rand}
\subsection{Quantum dynamics and unitary operators}
Quantum computers were originally perceived as a potential tool to solve quantum chemistry problems \cite{feynman1985quantum}. The fundamental  element behind quantum chemistry is the Hamiltonian operator, which in the second quantization form, can be written as,
\begin{equation}\label{eq: H}
 H =  \sum_{pq} h_{pq} \hat{c}_p^\dagger \hat{c}_q + \frac12\sum_{pqrs}  h_{pqrs} \hat{c}_p^\dagger \hat{c}_q^\dagger  \hat{c}_r \hat{c}_s.
\end{equation}
Here $\hat{c}^\dagger$ and $\hat{c}$ are respectively the creation and annihilation operators and they fulfill anti-commutator relations. This representation of the Hamiltonian can be obtained by a projection to atomic orbitals, although it is also possible to  employ plane wave basis \cite{babbush2018low}.

Many quantum algorithms for chemistry problems start with the observation that the dynamics of a quantum many-body problem is  determined by the unitary operator,
\begin{equation}\label{eq: ut}
 U(t)= \exp \big(-i t H\big). 
\end{equation} 
Here we have set  $\hbar=1$. We assume that the Hamiltonian  has been projected to $d$-dimensional space so $H \in \mathbb{C}^{d\times d}$. This can be 
done using the Jordan-Wigner (JW)  \cite{jordan1928pauli} or Bravyi-Kitaev (BK) approximations \cite{bravyi2002fermionic,seeley2012bravyi,hastings_improving_2014,havlivcek2017operator}, see \cite{tranter2018comparison} for more thorough discussions. For a description based on a wave function $\ket{\psi} \in  \mathbb{C}^d$ with  $\braket{\psi(0)}=1$,  the dynamics of interest is expressed as $\ket{\psi(t)}= U(t) \ket{\psi(0)}.$ On the other hand, in terms of the density matrix,
\begin{equation}
\rho(t) = \ket{\psi(t)}\bra{\psi(t)},
\end{equation}
 the time evolution is written as $\rho(t) = U(t) \rho(0) U(t)^\dagger.$ In either case, the unitary operator and its approximations play a fundamental role in understanding the underlying quantum dynamics. 

For simplicity, we assume that  the Hamiltonian has been mapped using the JW or BK method, and after such a transformation, the Hamiltonian is written as,
\[ H = h_1 + h_2 + \cdots + h_L.\]
In order to approximate the evolution corresponding to  \eqref{eq: ut}, we first divide the total Hamiltonian 
into two separate Hamiltonian operators, 
\begin{equation}\label{eq: H0H1}
 H= H_0 + H_1, 
\end{equation}
which include Hamiltonian terms as follows,
\begin{subequations}
\begin{align}
  H_1=& h_1 + h_2 + \cdots + h_{N_r}, \label{eq: H-1} \\
  H_0=& h_{N_r+1} + \cdots + h_L. \label{eq: H-0}
\end{align}
\end{subequations}

 We also assume in this partition that the number of quantum gates required to simulate the unitary operator $ \exp \big(-i \dt h_\ell \big)$ is bounded
 by a fixed constant. For details about the construction of the quantum circuits corresponding to such unitary evolution, and more detailed estimates of the required quantum resources, see \cite{whitfield2011simulation,tranter2018comparison}.
 Ideally, the separation \eqref{eq: H0H1} will result in no error if $H_0$ and $H_1$ commute. This has been explored in \cite{gokhale2019minimizing}. But 
 here we arrange the Hamiltonian terms by their magnitude in the ascending order:
\[ \|h_1\| \leq \|h_2\| \leq \cdots \leq \|h_L\|.\]
Thus the partition \eqref{eq: H0H1} corresponds to a cut-off, labelled by $N_r$. 
In the following sections, we will first focus on the treatment of $H_1$, which consists of $N_r$ Hamiltonian operators,  $1 \leq N_r \leq L,$ using a random algorithm.  In Section \ref{sec: partial}, we will study the added effect of approximating the unitary dynamics driven by $H_0.$

A standard approach to simulate  the dynamics is to divide the time interval into small time steps with step size $\dt$, and at each step, one can approximate \eqref{eq: ut} using an operator splitting scheme,
\begin{equation}
U(\dt) \approx \dsp  \exp \big(-i \dt H_0 \big) \prod_{\ell=1}^{N_r}  \exp \big(-i \dt h_\ell \big).
\end{equation}
The unitary operator $U_0:= \exp\left( -i \dt H_0\right) $ can be similarly approximated,
\begin{equation}\label{eq: U0}
U_0 \approx V_0, \quad V_0:= \prod_{\ell = N_r+1}^{L} \exp\left( -i \dt h_\ell \right).
\end{equation}
The error associated with such approximation, often referred to as the Trotter error, has been extensively studied \cite{babbush2015chemical,poulin2014trotter,hastings2014improving,tranter2019ordering,childs2019nearly}. The readers are referred to \cite{childs2021theory} and the references therein.   In particular, the error in \eqref{eq: U0} is of the order $\mathcal{O}(\dt^2)$ locally, which can be improved to  $\mathcal{O}(\dt^3)$ by using a symmetric splitting (also known as the Strang 's splitting), or higher order Suzuki's splitting \cite{seeley2012bravyi}. 
But for the moment, we neglect this splitting error, and assume $U_0$ is evaluated exactly. 


\subsection{Random algorithms}
The main challenge in directly implementing a Trotter splitting method is that the number of individual Hamiltonians in \eqref{eq: H-1} is proportional to $N^4$ with $N$ being the number of particles, i.e.,
$L= \mathcal{O}(N^4).$ To mitigate the issue, we consider an algorithm where $K$ individual Hamiltonian operators in $H_1$ are picked up in random, similar to the method by   Campbell \cite{campbell2019random}, which corresponds to the case of $K=1$. In general, the selected Hamiltonian terms form a batch with batch size $K.$  Toward this end, we introduce a Bernoulli random vector $\omega \in \{0,1\}^{N_r}$ such that, the number of non-zeros in $\omega$ is $K$; $1 \le K \ll N_r$. Specifically, let $\dt$ be the stepsize and $\{t_n\}_{n\geq 0}$ be the time steps.  At time $t_n$, we introduce the approximation,
\begin{equation}\label{eq: U-rand-ns}
U(\dt) \approx  \exp\big(-i \dt H_0 \big)  \prod_{\ell=1}^{N_r}  \exp \big(-i \frac{\dt N_r}K \omega_\ell^n h_\ell \big).
\end{equation} 

The index $n$ in the random variable $\omega_\ell^n$ indicates that they will be sampled independently at each step. The goal of this random  selection is to only evaluate $K$ Hamiltonians at a time. 
This approximation can directly be extended to a symmetric-splitting,
\begin{equation}\label{eq: U-rand-symm}
U(\dt) \approx   \prod_{\ell=1}^{N_r}  \exp \big(-i \frac{\dt N_r}{2K} \omega_{N_r-\ell+1}^n h_{N_r-\ell+1} \big) \exp\big(-i \dt H_0 \big)  \prod_{\ell=1}^{N_r} \exp \big(-i \frac{\dt  N_r}{2K} \omega_\ell^n h_\ell \big).
\end{equation} 

Consequently, the term  $\frac{N_r}{K} \sum_{\ell=1}^{N_r} \omega_\ell^n h_\ell$ can be viewed as a stochastic approximation of $H_1.$ To be more specific,
we define a random matrix,
\begin{equation}\label{eq: delH}
\delta\!H= \frac{N_r}K \sum_{\ell=1}^{N_r} \omega_\ell^n h_\ell - H_1.
\end{equation}
Clearly $\mathbb{E}[\dH]=0.$  Here we will denote the coefficients by a vector $\bd{\omega}^n =(\omega_1^n, \omega_2^n, \cdots, \omega_L^n)$. To ensure that at each step, exactly $K$
terms are selected, we enforce the condition that,
\begin{equation}
 \omega_\ell^n \in \{0,1\} \;\; \forall \ell, \;\textrm{and} \;  \dsp  \sum_{\ell=1}^{N_r} \omega_\ell^n= K.
\end{equation}
This condition will be simply written as $ \|\bd{\omega}^n\|_0=K$.

With this notation, let us first explain the evolution of the wave functions through unitary operations. We assume that $\ket{\psi(0)} \in \mathbb{C}^d$ is the initial wave function normlized such that $\braket{\psi(0)}=1$. Then the exact wave function after $n$ steps, denoted by $\ket{\psi_n}$, is given by,
\begin{equation}\label{eq: psin}
 \ket{\psi_n} = U(\dt)^n \ket{\psi(0)}, \quad n \geq 0.
\end{equation}  

On the other hand, the approximate wave function from \eqref{eq: U-rand-ns},  denoted by $\ket{\phi_n}$, is constructed as follows,
\begin{equation}\label{eq: phin}
 \ket{\phi_n} = V_{n-1} V_{n-2} \cdots V_1 V_0 \ket{\psi(0)},
\end{equation}
where for each $m \geq 0$,
\begin{equation}
 V_m = U_0(\dt) \exp\big( -i \dt (H_1 + \dH_m) \big), 
\end{equation}
with $\dH_m$ independently sampled from \eqref{eq: delH}.  
The pseudo code is illustrated in Algorithm \ref{alg: rand1}, where we used the notation $[N_r]=\{1,2,\cdots, N_r\}$.

\begin{algorithm}\label{alg: rand1}
	\caption{The Random Unitary Operations.}
	\begin{algorithmic}[1]
	\Require $\ket{\psi(0)}, N, \dt=t/N.$
		\State $U_0=\exp\big(-i\dt H_0 \big)$
	
	\For $n=1:N$
	\State Randomly pick $K$ indices $S \subset [N_r]$ 
		         \For $\; \ell \in S$
					\State $U=\exp\big(-i\dt \frac{N_r}K h_\ell \big)$
		\State $ \ket{\phi} \leftarrow U \ket{\phi} $
		\EndFor
			\State $ \ket{\phi} \leftarrow U_0 \ket{\phi} $
		\EndFor
	\end{algorithmic} 
\end{algorithm}

The choice of  $H_0$ and $H_1$ will be discussed in Section \ref{sec: partial}. 

The accuracy of operator-splitting methods \eqref{eq: U-rand-ns} can often be understood using the Baker-Campbell-Hausdorff (BCH) formula \cite{baker1905alternants,campbell1897law,hausdorff1906symbolische}, which asserts that for  any two operators $X$ and $Y$, the product of the two exponentials, $\exp(X)$ and $\exp(Y)$, can be expressed in the form of a single exponential function. Namely, 
$\exp(X)\exp(Y) = \exp(Z),$  where $Z$ can be expanded as $$Z = X+Y + \frac{1}{2}[X,Y] + \frac{1}{12}([X,X,Y]+[Y,Y,X]) + \frac{1}{24} [X,Y,Y,X] + \cdots.$$
The BCH formula provides a quick glimpse of the random algorithm. More specifically,  we  can formally write
the one-step approximation \eqref{eq: U-rand-ns} as,
\begin{equation}\label{eq: one-step}
\dsp\exp\big(-i \dt H_0 \big)  \exp \big(-i \dt (H_1 + \dH) \big) 
=\exp \Big( -i \dt (H +\dH) 
- \frac{\dt^2}{2}[H_0, H_1+\dH] 
+ \mathcal{O}(\dt^3) \Big).
\end{equation}
Since $\mathbb{E}[\delta\!H]=0$,  the leading term in \eqref{eq: one-step} can be regarded as a stochastic approximation of $H$. 

To look at the effect of this algorithm over multiple time steps, one can first extend the BCH formula to the case with multiple unitary operators. Namely we have,
\begin{equation}\label{eq: bch'}
\exp(X_1)\exp(X_2) \cdots \exp(X_n) = \exp(Z),
\end{equation}
where
\begin{equation}\label{eq: bch-Z}
Z = X_1+X_2 + \cdots + X_n + \frac12 \sum_{1\le \ell < m \le n} [X_\ell, X_m] + h.o.t.
\end{equation}
For the random algorithm \eqref{eq: phin}, if one considers the case $K=1$, then $H_1+ \dH_m= N_r h_{\ell_m}$, where $\ell_m$ is randomly drawn with probability $1/N_r$. According to \eqref{eq: bch'} and \eqref{eq: bch-Z}, we have, 
\[ \ket{ \phi_n} = \exp\big( -i tH_0 -i t N_r  \frac{1}{n} \sum_{m=1}^n  h_{\ell_m}  + \mathcal{O}(\dt) \big) \ket{\psi(0)}.\]
The term $ \frac{1}{n} \sum_{m=1}^n  h_{\ell_m},$ due to the Law of Large Numbers, converges to $H_1/N_r$. Therefore, the random algorithm can be interpreted as
a Monte Carlo method in time. A similar observation has been made in the random batch algorithm \cite{jin2021random}.

Meanwhile, the statistical error will inevitably depend on the variance. To calculate the variance, we let $\left\{\bd{u}_j \right\}_{j=1}^{N_r}$ be the standard basis vectors in $\mathbb{R}^{N_r}$.   With direct calculations, we have,
\begin{lemma}\label{var-uniform}
When $ K=1$, suppose that the random vector  $\bd{\omega}$ has probability given by,
\begin{equation}\label{eq: unif}
    \mathbb{P}(\bd{\omega}=\bd{u}_j)=\frac{1}{N_r}.
\end{equation}
 
 Then, for the random matrix $\delta \!H$, 
the mean and variance are given respectively by,
  \begin{equation}\label{eq: dH}
 \mathbb{E}[\delta\!H ]=0, \quad \Sigma:= \mathbb{E}[\delta\!H^2]= N_r \Delta, \quad \Delta:= \sum_{\ell=1}^{N_r} h_\ell^2 -\frac{1}{N_r} H_1^2.
\end{equation}
\end{lemma}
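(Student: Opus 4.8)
The plan is to specialize the definition \eqref{eq: delH} to the case $K=1$ and compute the first two moments by a direct calculation. When $K=1$, exactly one index is selected, so writing $\bd{\omega}=\bd{u}_J$ with $J$ the (random) chosen index, the matrix $\dH$ reduces to $\dH = N_r h_J - H_1$, where $J$ is uniform on $\{1,\dots,N_r\}$ by \eqref{eq: unif}. Since $H_1 = h_1+\cdots+h_{N_r}$ is deterministic and $\mathbb{E}[h_J]=\frac{1}{N_r}\sum_{\ell=1}^{N_r}h_\ell=\frac{1}{N_r}H_1$, linearity of expectation immediately gives $\mathbb{E}[\dH]=N_r\cdot\frac{1}{N_r}H_1-H_1=0$.

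For the second moment I would expand the square while keeping track of operator ordering, since the $h_\ell$ need not commute with $H_1$:
\[
\dH^2 = \big(N_r h_J - H_1\big)^2 = N_r^2 h_J^2 - N_r\big(h_J H_1 + H_1 h_J\big) + H_1^2 .
\]
Taking expectations term by term, I would use $\mathbb{E}[h_J^2]=\frac{1}{N_r}\sum_{\ell=1}^{N_r}h_\ell^2$, together with the fact that $H_1$ is not random and therefore passes through the expectation on either side: $\mathbb{E}[h_J H_1]=\mathbb{E}[h_J]\,H_1=\frac{1}{N_r}H_1^2$ and likewise $\mathbb{E}[H_1 h_J]=\frac{1}{N_r}H_1^2$. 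Hence the two cross terms combine to $-N_r\cdot\frac{2}{N_r}H_1^2=-2H_1^2$, and
\[
\Sigma=\mathbb{E}[\dH^2] = N_r\sum_{\ell=1}^{N_r}h_\ell^2 - 2H_1^2 + H_1^2 = N_r\sum_{\ell=1}^{N_r}h_\ell^2 - H_1^2 = N_r\Big(\sum_{\ell=1}^{N_r}h_\ell^2 - \tfrac{1}{N_r}H_1^2\Big) = N_r\Delta,
\]
which is the claimed identity (and equals the variance because $\mathbb{E}[\dH]=0$).

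I do not expect a genuine obstacle here; the computation is elementary. The only point needing a little care is the noncommutativity of $H_1$ and the individual $h_\ell$, which forces one to keep the cross term in the symmetric form $h_J H_1 + H_1 h_J$ rather than collapsing it to $2h_J H_1$. Since $H_1$ is deterministic it still factors out of each expectation cleanly, so noncommutativity never actually interferes, and the same argument (with the factor $N_r/K$ in place of $N_r$, and summing over the $K$ chosen indices) is what one would generalize in the later, $K>1$ and importance-sampling, versions.
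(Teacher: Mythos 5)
Your computation is correct and is precisely the ``direct calculation'' the paper alludes to (the lemma is stated without a written-out proof), including the careful symmetric treatment of the cross term $h_J H_1 + H_1 h_J$. Nothing further is needed.
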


The Lemma can be extended to the case for $K>1$. With tedious calculations, we find that  $  \mathbb{E}[\delta\!H ]=0$ and
$$\Sigma:=\mathbb{E}[\delta\!H^2]=  \frac{1}{K} \frac{N_r-K}{N_r-1}  N_r \Delta.$$
So there is about a $K$ fold reduction of the variance.

 Another important extension is to sample the Hamiltonians with general discrete probability,
 \begin{equation}\label{eq: pj}
 \mathbb{P}(\bd{\omega}=\bd{u}_j)=p_j,
\end{equation}
where $p_j \geq 0$ and $\sum_j p_j=1$ represents the discrete probability. The choice of $p_j$ will be made precise later. The scenario in Lemma \ref{var-uniform} corresponds to the uniform distribution.  In general, the random algorithm is implemented by selecting a Hamiltonian
$h_j$ with probability $p_j$, and applying the unitary operator $\exp\left(-it h_j/p_j\right)$ to the wave function \cite{campbell2019random}. The corresponding variance is given by,
\begin{equation}\label{eq: Sigma}
 \Sigma = \sum_{j=1}^{N_r} \frac1{p_j} h_j^2 - H_1^2. 
\end{equation}

Throughout the paper, our estimates will depend on  some  constants which we define now. The first is the spectral norm of $\Sigma$, 
\begin{equation}\label{eq: Lambda}
 \Lambda = \|\Sigma\|.
\end{equation}
For example, for the uniform sampling in Lemma \ref{var-uniform}, we have \(\Lambda \leq N_r \sum_\ell \|h_\ell^2\|\).

In addition, we define $\Gamma$ as  the almost sure bound of $\dH$
\begin{equation}\label{eq: Gamma}
\Gamma= \sup_{\|\bd{\omega}\|_0=K} \|\dH\|.
\end{equation}
For example when $K=1$, from \eqref{eq: delH}, we can use the following simple bound,  
\begin{equation}\label{eq: Gamma'}
\Gamma= N_r\max_{1\leq \ell \leq N_r} \|h_\ell\|.
\end{equation}

 It is important to observe that both these error constants depend on $N_r$. This provides a flexibility to control various error terms. 
\medskip 

Although the interpretation using the BCH formula  \eqref{eq: one-step} provides important insights, it does not reveal an explicit error bound, an observation 
made by several previous works \cite{childs2019nearly}. In the next few sections, we will derive local and global error bounds. 

\subsection{Analysis of the one-step error}

To validate this observation, we first drop the $ \mathcal{O}(\dt^2)$ term in the exponent of \eqref{eq: one-step}, and estimate the error induced by such a stochastic approximation.


\begin{theorem}\label{eq: non-symmetric}
  Consider the exact unitary dynamics at $t=\dt$ given by the wave function, 
  \begin{equation}
 \ket{\psi(\dt)} =  \exp \big( -i \dt H \big) \ket{\psi(0)},
\end{equation}
    and let 
    \begin{equation}
\ket{\phi(\dt)} =  \exp \big( -i \dt H - i \dt  \delta\! H \big) \ket{\psi(0)}
\end{equation}
   be an approximation (with the same initial condition). Let 
   \begin{equation}\label{eq: chi}
\ket{\chi(\dt)}= \ket{\phi(\dt)} - \ket{\psi(\dt)}
\end{equation}
    be the approximation error. 
    
  Then  the error has the following bound,
  \begin{equation}\label{eq: ebound1}
 \| \chi(\dt) \| \le  \int_0^{\dt} \bra {\psi(t)} \delta\!H^2 \ket{\psi(t)}^{\frac12} dt.
\end{equation}
Here the norm is induced by the inner product. Namely,
\begin{equation}\label{eq: 2norm}
\| \chi \|:=  \braket{\chi}^{\frac12}.
\end{equation}

 Consequently, the error can be further bounded by,
\begin{equation}\label{eq: ebound1'}
\| \chi(\dt) \| \le  \dt \|\delta\!H\|.
\end{equation}
Here the matrix norm is the one induced by the vector norm \eqref{eq: 2norm}.

\end{theorem}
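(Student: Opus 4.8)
The plan is to use the standard Duhamel/variation-of-constants argument, which converts the difference of two unitary flows into a time integral involving the generator perturbation $\delta\!H$. Define the interpolating family $\ket{\psi(t)} = \exp(-itH)\ket{\psi(0)}$ and $\ket{\phi(t)} = \exp(-it(H+\delta\!H))\ket{\psi(0)}$ for $t \in [0,\dt]$, and set $\ket{\chi(t)} = \ket{\phi(t)} - \ket{\psi(t)}$. Both satisfy Schr\"odinger-type ODEs, so $\ket{\chi(t)}$ solves $\frac{d}{dt}\ket{\chi(t)} = -i(H+\delta\!H)\ket{\chi(t)} - i\,\delta\!H\,\ket{\psi(t)}$ with $\ket{\chi(0)}=0$. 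The key observation is that the homogeneous part of this ODE is generated by the \emph{anti-Hermitian} operator $-i(H+\delta\!H)$ (recall $\delta\!H$ is Hermitian, being a real linear combination of the Hermitian $h_\ell$ minus the Hermitian $H_1$), hence norm-preserving. This means the growth of $\|\chi(t)\|$ is driven purely by the inhomogeneous forcing term.

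Concretely, I would compute $\frac{d}{dt}\|\chi(t)\|^2 = 2\,\Re\,\bra{\chi(t)}\tfrac{d}{dt}\ket{\chi(t)}$. The contribution of the homogeneous part is $2\,\Re\,\bra{\chi}(-i(H+\delta\!H))\ket{\chi} = 0$ since $(H+\delta\!H)$ is Hermitian. What remains is $\frac{d}{dt}\|\chi(t)\|^2 = 2\,\Re\,\bra{\chi(t)}(-i\,\delta\!H)\ket{\psi(t)} \le 2\,\|\chi(t)\|\,\|\delta\!H\ket{\psi(t)}\|$ by Cauchy--Schwarz, where $\|\delta\!H\ket{\psi(t)}\| = \bra{\psi(t)}\delta\!H^2\ket{\psi(t)}^{1/2}$ using that $\delta\!H$ is Hermitian. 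Dividing through by $2\|\chi(t)\|$ (a standard argument handling the point where $\|\chi\|$ might vanish, e.g.\ by working with $\sqrt{\|\chi\|^2+\eps}$ and letting $\eps \to 0$, or invoking the fact that $\frac{d}{dt}\|\chi\| \le \|\frac{d}{dt}\ket{\chi}\|$) gives $\frac{d}{dt}\|\chi(t)\| \le \bra{\psi(t)}\delta\!H^2\ket{\psi(t)}^{1/2}$. Integrating from $0$ to $\dt$ and using $\|\chi(0)\|=0$ yields the bound \eqref{eq: ebound1}.

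The final bound \eqref{eq: ebound1'} then follows immediately: since $\ket{\psi(t)}$ is a unit vector for all $t$ (the exact evolution is unitary), we have $\bra{\psi(t)}\delta\!H^2\ket{\psi(t)}^{1/2} \le \|\delta\!H^2\|^{1/2} = \|\delta\!H\|$, and integrating the constant over $[0,\dt]$ gives $\|\chi(\dt)\| \le \dt\,\|\delta\!H\|$.

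I do not anticipate a serious obstacle here; the argument is classical. The only mildly delicate point is the non-differentiability of $t \mapsto \|\chi(t)\|$ at zeros of $\chi$, which is handled by the regularization trick mentioned above, or more cleanly by integrating the ODE for $\ket{\chi}$ directly in Duhamel form, $\ket{\chi(\dt)} = -i\int_0^{\dt} e^{-i(\dt-t)(H+\delta\!H)}\,\delta\!H\,\ket{\psi(t)}\,dt$, and then applying the triangle inequality for integrals together with unitarity of $e^{-i(\dt-t)(H+\delta\!H)}$ to pull the norm inside: $\|\chi(\dt)\| \le \int_0^{\dt} \|\delta\!H\ket{\psi(t)}\|\,dt$. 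This Duhamel route avoids differentiating a norm altogether and is probably the cleanest presentation.
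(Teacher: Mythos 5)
Your argument is correct and is essentially the paper's own proof: both derive the inhomogeneous Schr\"odinger-type ODE for $\ket{\chi(t)}$, observe that the Hermitian generator contributes nothing to $\frac{d}{dt}\braket{\chi(t)}$, apply Cauchy--Schwarz, and integrate the resulting differential inequality (the paper invokes the Ou-Iang inequality where you divide by $2\|\chi\|$ with a regularization). Your closing Duhamel-form remark is a clean equivalent packaging of the same estimate, and the deduction of \eqref{eq: ebound1'} from unitarity of the exact flow matches the paper.
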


This bound has been proved in numerous works, e.g., \cite{chen2020quantum}. But here we will include a proof, which will be extended
to an analysis of the mean square error (MSE). Similar to the analysis in \cite{an2021time}, we will work with the wave function and measure the error by its $L^2$ norm. However, we will keep the initial state arbitrary so that the error bounds can also be carried over to the operator norms. 

\begin{proof}
We first extend the function $|\chi\rangle$ to $t\in [0,\dt]$,
 $$\ket{\chi(t)} =  \exp \big( -i t H - i t  \delta \!H \big) \ket{\psi(0)} -  \exp \big( -i t H \big) \ket{ \psi(0)}.$$
It can be directly verified that $\ket{\chi(t)}  $ satisfies the differential equation,
\begin{equation}
\frac{d}{dt} \ket{\chi(t)} = -i (H+\delta\!H)  \ket{\chi(t)} +i\delta\!H \ket{\psi(t)}. 
\end{equation}

Since $\delta\!H$ is Hermitian, we have,
\begin{equation}
\frac{d}{dt} \braket{ \chi(t)}  = i  \bra{  \chi(t)}  \delta\!H \ket{\psi(t)} - i  \bra{  \psi(t)} \delta\!H \ket{\chi(t)}=2 \text{Im} \bra{\psi(t)} \delta\!H \ket{\chi(t)}.
\end{equation}
Therefore, using Cauchy-Schwarz inequality, we have, 
\[
\frac{d}{dt} \braket{ \chi(t)}  \le 2  \braket{ \chi(t)}^{1/2}  \bra{\psi(t)} \delta\!H^2 \ket{\psi(t)}^{1/2}.
\]
By direct integration, or using the Ou-Iang inequality \cite{qin2016integral}, the  inequality leads to the desired bound \eqref{eq: ebound1}.

\end{proof}


\medskip

We now return to the random algorithm. Since the perturbation is stochastic, the error will be reflected in the variance matrix. From \eqref{eq: dH}, 
 and by applying the Jensen's inequality to \eqref{eq: ebound1}, we find that for $K=1$ with uniform sampling \eqref{eq: unif},
  \begin{equation}\label{eq: ebound2'}
 \mathbb{E} \big[ \| \chi(\dt) \| \big] \le \sqrt{N_r}  \int_0^{\dt} \bra{\psi(t)} \Sigma \ket{\psi(t)}^{\frac12} dt.
\end{equation} 

In practice, often of interest is the MSE, which can also be deduced from  \eqref{eq: ebound1} directly Using  Cauchy-Schwarz inequality:
\begin{equation}
 \braket{ \chi(\dt) }  \le  \dt \int_0^{\dt} \bra{\psi(t)} \delta\!H^2 \ket{\psi(t)} dt.
\end{equation}

The MSE immediately follows by taking the expectation and using the observation that $\ket{\psi(t)}$ is deterministic.

\begin{coro}
If $K=1$, the MSE has the following bound,
  \begin{equation}\label{eq: ebound2}
 \mathbb{E} \big[ \braket{\chi(\dt)}  \big] \le \Lambda \dt^2 .
\end{equation} 
Here $\Lambda$ is the spectral norm from \eqref{eq: Lambda}.
\end{coro}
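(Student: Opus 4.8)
The plan is to obtain the corollary directly from Theorem~\ref{eq: non-symmetric} together with the variance identity $\mathbb{E}[\dH^2]=\Sigma$ (see \eqref{eq: dH}, and \eqref{eq: Sigma} for the general sampling distribution), by keeping the bound \eqref{eq: ebound1} in integral form rather than collapsing it immediately to \eqref{eq: ebound1'}. First I would square \eqref{eq: ebound1} and apply the Cauchy--Schwarz (equivalently Jensen) inequality in the time variable on $[0,\dt]$:
\[
\braket{\chi(\dt)} \;=\; \|\chi(\dt)\|^2 \;\le\; \Big( \int_0^{\dt} \bra{\psi(t)} \dH^2 \ket{\psi(t)}^{1/2}\, dt \Big)^2 \;\le\; \dt \int_0^{\dt} \bra{\psi(t)} \dH^2 \ket{\psi(t)}\, dt ,
\]
which is precisely the intermediate inequality displayed just before the corollary, and which holds pathwise, i.e.\ for every realization of $\bd{\omega}$.

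Next I would take the expectation over $\bd{\omega}$. Because the integrand is nonnegative, Tonelli's theorem permits interchanging $\mathbb{E}$ with $\int_0^{\dt}$, and because $\ket{\psi(t)}=\exp(-itH)\ket{\psi(0)}$ carries no randomness, the expectation passes through the bra/ket pairing:
\[
\mathbb{E}\big[\braket{\chi(\dt)}\big] \;\le\; \dt \int_0^{\dt} \bra{\psi(t)} \,\mathbb{E}[\dH^2]\, \ket{\psi(t)}\, dt \;=\; \dt \int_0^{\dt} \bra{\psi(t)} \Sigma \ket{\psi(t)}\, dt .
\]
Then, since $\Sigma \succeq 0$ and unitary evolution preserves the norm, $\braket{\psi(t)}=\braket{\psi(0)}=1$, so the Rayleigh quotient satisfies $\bra{\psi(t)}\Sigma\ket{\psi(t)}\le\|\Sigma\|=\Lambda$ uniformly in $t$. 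Substituting this constant bound, using $\int_0^{\dt}\Lambda\,dt=\Lambda\dt$, and combining with the prefactor $\dt$ yields $\mathbb{E}[\braket{\chi(\dt)}]\le \dt\cdot\Lambda\dt=\Lambda\dt^2$.

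There is no serious obstacle: the corollary is essentially a repackaging of Theorem~\ref{eq: non-symmetric} once one records that $\mathbb{E}[\dH^2]=\Sigma$. The only points deserving care are (i) retaining the time integral and invoking Cauchy--Schwarz to move the square inside at the cost of a factor $\dt$, which is what lets the averaged quantity $\mathbb{E}[\dH^2]$ appear rather than the cruder $\mathbb{E}[\|\dH\|^2]$ one would get by starting from \eqref{eq: ebound1'}; (ii) justifying the exchange of expectation and integration, which is immediate from nonnegativity; and (iii) using conservation of $\braket{\psi(t)}$ to control the quadratic form by $\Lambda$. The same computation also gives the $K>1$ analogue, with $\Lambda$ replaced by the spectral norm of $\tfrac1K\tfrac{N_r-K}{N_r-1}N_r\Delta$, using the variance formula stated after Lemma~\ref{var-uniform}.
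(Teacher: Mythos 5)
Your proposal is correct and follows essentially the same route as the paper: squaring the integral bound \eqref{eq: ebound1} via Cauchy--Schwarz to obtain $\braket{\chi(\dt)} \le \dt \int_0^{\dt} \bra{\psi(t)} \dH^2 \ket{\psi(t)}\, dt$, then taking the expectation through the deterministic $\ket{\psi(t)}$ to replace $\dH^2$ by $\Sigma$ and bounding the quadratic form by $\Lambda$. Your write-up simply makes explicit the steps (Tonelli, normalization of $\ket{\psi(t)}$) that the paper leaves implicit.
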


This analysis suggests that the MSE depends critically on the variance term, which can be written as  $\mathbb{E}[\bra{\psi} \delta\!H^2\ket{\psi} ].$ 
In principle, it is possible to choose the discrete probability \eqref{eq: pj} to minimize the variance. For instance, we may construct an optimization problem,
\begin{equation}\label{eq: opt-pj1}
  \min_{ \{p_j\} } \mathbb{E}[\bra{\psi} \delta\!H^2\ket{\psi} ] - Z (\sum_j p_j -1).
\end{equation}
The last term is introduced as a Lagrange multiplier to enforce the constraint $\sum_j p_j =1.$ With direct calculations, we find that,
\begin{equation}\label{eq: pj1'}
 p_j= \displaystyle \frac{\| h_j \ket{\psi}\| } {\sum_j \|  h_j \ket{\psi}\|}.
\end{equation}

A potential drawback of this approach is that the selection depends on $\ket{\psi}$. Therefore, the probability would have to be determined at every step. A compromise is to choose
to minimize the spectral norm: 
\begin{equation}\label{eq: opt-pj2}
 \min_{ \{p_j\} } \left \|\mathbb{E}[\delta\!H^2] \right\| - Z (\sum_j p_j -1) .
\end{equation}
 However, the corresponding optimization problem does not seem to admit explicit solutions. 
An empirical extension of \eqref{eq: pj1'} is the following choice, which has been used in \cite{campbell2019random},
\begin{equation}\label{eq: pj1}
 p_j= \displaystyle \frac{\| h_j \| } {\sum_j \|  h_j \|}.
\end{equation}
With this choice, we have from \eqref{eq: Sigma} that,
\[ \Sigma = \mathbb{E}[\delta\!H^2] = \sum_j \|h_j\|  \sum_j \frac{h_j^2}{ \|h_j\|} - H_1^2.  \] 
An simple upper bound, used by \cite{chen2020quantum}, is given by, 
\begin{equation}\label{eq: Sigma'}
\Lambda = \lambda^2,  \text{with}\; \lambda= \sum_j \|h_j\|.
\end{equation}

\medskip

Now we analyze the one-step error from the non-symmetric splitting \eqref{eq: one-step}. The following estimate has been derived  in \cite{childs2019nearly}.
\begin{theorem}\label{eq: split-error}
  Consider the exact unitary dynamics given by 
  \begin{equation}
\ket{\psi(\dt)} =  \exp \big( -i \dt H \big) \ket{\psi(0)}.
\end{equation}
Suppose $H=H_0 + H_1$,    and let  $$\ket{\phi(\dt)} =  \exp \big( -i \dt H_0 \big) \exp\big(- i \dt H_1 \big) \ket{\psi(0)},$$ be an approximation  at one step using non-symmetric splitting (with the same initial condition). Let $$\ket{\chi(\dt)}= \ket{\phi(\dt)} - \ket{\psi(\dt)}, $$ be the approximation error. 
  Then the $L_2$ error has the following bound,
  \begin{equation}\label{eq: ebound'}
 \| \chi(\dt) \| \le \frac{\dt^2}2  \|[H_0,H_1]\|.
\end{equation}


\end{theorem}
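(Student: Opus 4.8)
The plan is to reduce the wave-function statement to an operator estimate for the propagator difference and then bound that difference by Duhamel's principle, in the same spirit as the proof of Theorem~\ref{eq: non-symmetric} but now tracking the commutator $[H_0,H_1]$ rather than $\dH$. First I would introduce the two one-parameter families $U(t)=\exp(-itH)$ and $W(t)=\exp(-itH_0)\exp(-itH_1)$, so that $\ket{\psi(t)}=U(t)\ket{\psi(0)}$, $\ket{\phi(t)}=W(t)\ket{\psi(0)}$, and $\ket{\chi(t)}=\ket{\phi(t)}-\ket{\psi(t)}$ vanishes at $t=0$. Differentiating the product gives $\frac{d}{dt}W(t)=-i\big(H_0+\wt H_1(t)\big)W(t)$, where $\wt H_1(t):=\exp(-itH_0)\,H_1\,\exp(itH_0)$ is the interaction-picture Hamiltonian, which is Hermitian. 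Subtracting the equation $\frac{d}{dt}U(t)=-iHU(t)$ yields
\[
\frac{d}{dt}\ket{\chi(t)} = -i H\ket{\chi(t)} - i\big(\wt H_1(t)-H_1\big)\ket{\phi(t)}.
\]

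Next, because $H$ and $\wt H_1(t)-H_1$ are Hermitian and $\ket{\phi(t)}$ has unit norm, I would run the same energy estimate as in the previous proof: one finds $\frac{d}{dt}\braket{\chi(t)} = 2\,\Im\bra{\chi(t)}\big(\wt H_1(t)-H_1\big)\ket{\phi(t)}$, and the Cauchy--Schwarz inequality gives
\[
\frac{d}{dt}\braket{\chi(t)} \le 2\,\braket{\chi(t)}^{1/2}\,\big\|\wt H_1(t)-H_1\big\|.
\]
Direct integration (equivalently, the Ou--Iang inequality invoked earlier) then reduces the claim to the scalar estimate $\| \chi(\dt)\| \le \int_0^{\dt}\|\wt H_1(t)-H_1\|\,dt$.

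The only computation that matters is the bound on $\|\wt H_1(t)-H_1\|$, and this is where the factor $\tfrac12$ is produced. Writing $\wt H_1(t)-H_1 = \int_0^t \frac{d}{dr}\big(\exp(-irH_0)H_1\exp(irH_0)\big)\,dr = -i\int_0^t \exp(-irH_0)[H_0,H_1]\exp(irH_0)\,dr$ and using that conjugation by a unitary preserves the operator norm gives $\|\wt H_1(t)-H_1\|\le t\,\|[H_0,H_1]\|$; hence $\int_0^{\dt}\|\wt H_1(t)-H_1\|\,dt \le \frac{\dt^2}{2}\,\|[H_0,H_1]\|$, which is exactly \eqref{eq: ebound'}.

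There is no genuine obstacle: the statement is classical. The one thing to get right is to resist bounding $\|\wt H_1(t)-H_1\|$ crudely — replacing it by $2\|H_1\|$ would destroy the commutator structure, and replacing it by $\dt\,\|[H_0,H_1]\|$ uniformly in $t$ would lose the constant; the sharp $\tfrac12$ comes precisely from the nested integral $\int_0^{\dt}\!\!\int_0^t dr\,dt=\dt^2/2$. Finally I note an equivalent, slightly cleaner route that bypasses the wave function: since $\frac{d}{dt}\big(\exp(itH)W(t)\big)=-i\exp(itH)\big(\wt H_1(t)-H_1\big)W(t)$, integrating from $0$ to $\dt$ gives $W(\dt)-U(\dt)=-i\int_0^{\dt}\exp\!\big(-i(\dt-s)H\big)\big(\wt H_1(s)-H_1\big)W(s)\,ds$, so $\|W(\dt)-U(\dt)\|\le\int_0^{\dt}\|\wt H_1(s)-H_1\|\,ds\le\frac{\dt^2}{2}\|[H_0,H_1]\|$, and then $\|\chi(\dt)\|\le\|W(\dt)-U(\dt)\|\,\|\psi(0)\|$ yields the result; I would present whichever form fits best with the rest of the exposition.
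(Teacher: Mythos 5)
Your proof is correct. Note that the paper itself does not supply a proof of this theorem at all --- it simply cites Childs et al.\ for the estimate --- so there is no in-paper argument to compare against line by line; what you have written is a complete, self-contained derivation. Your route deliberately mirrors the paper's proof of Theorem~\ref{eq: non-symmetric}: you pass to the interaction picture so that $\ket{\phi(t)}=W(t)\ket{\psi(0)}$ solves a Schr\"odinger equation with the Hermitian generator $H_0+\wt{H}_1(t)$, subtract to get $\frac{d}{dt}\ket{\chi(t)}=-iH\ket{\chi(t)}-i\bigl(\wt{H}_1(t)-H_1\bigr)\ket{\phi(t)}$, and then run the identical energy estimate, with the perturbation $\dH$ of Theorem~\ref{eq: non-symmetric} replaced by the deviation $\wt{H}_1(t)-H_1$. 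All the individual steps check out: the product-rule computation of $\frac{d}{dt}W(t)$, the Hermiticity of $\wt{H}_1(t)$, the Cauchy--Schwarz step using $\|\phi(t)\|=1$, the identity $\wt{H}_1(t)-H_1=-i\int_0^t e^{-irH_0}[H_0,H_1]e^{irH_0}\,dr$, and the nested integral producing the constant $\tfrac12$. The closing operator-norm variant via $\frac{d}{dt}\bigl(e^{itH}W(t)\bigr)$ is also correct and is arguably preferable here, since the paper later applies this bound at the operator level (e.g.\ in Corollary~\ref{cor-split-error} and in the telescoping decomposition for multiple Hamiltonians), where a bound on $\|W(\dt)-U(\dt)\|$ rather than on a single wave function is what is actually used.
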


More generally, one can extend Theorem \ref{eq: split-error} to include  splitting methods with multiple Hamiltonians, e.g., we consider a dynamics with $K$ Hamiltonian terms,
\begin{equation}\label{eq: multiple-splitting}
\ket{\chi} =   \exp \big( -i \dt H_1 \big) \exp\big(- i \dt H_2 \big) \cdots \exp\big(- i \dt H_K \big) \ket{\psi(0)} - \exp\big(- i \dt (H_1+H_2 + \cdots + H_K) \big)   \ket{\psi(0)}.
\end{equation}
 This can be analyzed by decomposing the error as follows,
 \[
 \begin{aligned}
 \ket{\chi} = & \exp \big( -i \dt H_1 \big)  \cdots \exp\big(- i \dt H_K \big) \ket{\psi(0)} \\ & \qquad -  \exp \big( -i \dt H_1 \big)  \cdots \exp\big(- i \dt H_{K-2} \big)  \exp\big(- i \dt (H_{K-1} + H_K) \big)  \ket{\psi(0)} \\
 &+  \exp \big( -i \dt H_1 \big)  \cdots \exp\big(- i \dt H_{K-2} \big)  \exp\big(- i \dt (H_{K-1} + H_K) \big)  \ket{\psi(0)} \\
 &\qquad - \exp \big( -i \dt H_1 \big) \cdots \exp\big(- i \dt H_{K-3} \big)  \exp\big(- i \dt (H_{K-2}+H_{K-1} + H_K) \big)  \ket{\psi(0)} \\
 & \cdots \\
 &+ \exp \big( -i \dt H_1 \big) \exp\big(- i \dt (H_{2}+\cdots + H_K) \big)  \ket{\psi(0)} \\ &\qquad - \exp\big(- i \dt (H_1+H_2 + \cdots + H_K) \big)   \ket{\psi(0)}.\\
\end{aligned}
 \]

 By applying the preceding theorem repeatedly, we arrive at the following bound,
 \begin{equation}
\| \ket{\chi}\| \leq \frac{\dt^2}2  \Big(\|[H_{K-1},H_K]\| + \|[H_{K-2},H_{K-1}]\| + \|[H_{K-2},H_{K}]\| + \cdots + \|[H_{1},H_2]\| +
 \cdots +  \|[H_{1},H_K]\| \Big).
\end{equation}
 \begin{coro}\label{cor-split-error}
 The error associated with the splitting \eqref{eq: multiple-splitting} can be bounded by,
 \begin{equation}
\|{\chi}\| \leq C K^2 \dt^2.
\end{equation}
 \end{coro}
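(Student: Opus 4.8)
The plan is to derive Corollary \ref{cor-split-error} directly from the telescoping decomposition of $\ket{\chi}$ already displayed above, combined with the bound of Theorem \ref{eq: split-error}. The decomposition writes $\ket{\chi}$ as a sum of $K-1$ telescoping differences; the $j$-th difference (counting from the right) replaces $\exp(-i\dt H_{K-j})\exp(-i\dt(H_{K-j+1}+\cdots+H_K))$ by $\exp(-i\dt(H_{K-j}+H_{K-j+1}+\cdots+H_K))$, with a common unitary prefactor $\exp(-i\dt H_1)\cdots\exp(-i\dt H_{K-j-1})$. Since unitaries preserve the norm \eqref{eq: 2norm}, applying Theorem \ref{eq: split-error} to each difference with the roles $H_0 \leftarrow H_{K-j}$ and $H_1 \leftarrow H_{K-j+1}+\cdots+H_K$ gives a term bounded by $\tfrac{\dt^2}{2}\|[H_{K-j}, H_{K-j+1}+\cdots+H_K]\|$. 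Expanding the commutator by bilinearity and the triangle inequality produces exactly the double sum $\tfrac{\dt^2}{2}\sum_{1\le \ell < m \le K}\|[H_\ell,H_m]\|$ displayed before the corollary.

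From there the bound to $CK^2\dt^2$ is a counting argument: the double sum has $\binom{K}{2}$ terms, so $\|\chi\| \le \tfrac{\dt^2}{2}\binom{K}{2}\max_{\ell<m}\|[H_\ell,H_m]\| \le \tfrac{\dt^2}{4}K(K-1)\,\max_{\ell<m}\|[H_\ell,H_m]\|$. Since each $H_\ell$ here is a single Hamiltonian term $h_\ell$ whose norm is assumed bounded (and whose associated gate cost is assumed $\mathcal{O}(1)$ in the partition discussion), $\|[H_\ell,H_m]\| \le 2\|h_\ell\|\|h_m\|$ is bounded by an absolute constant independent of $K$. Absorbing $\tfrac14 \max_{\ell<m}\|[H_\ell,H_m]\|$ into a single constant $C$ yields $\|\chi\| \le CK^2\dt^2$.

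The one point that deserves care — and which I would state explicitly — is what "$C$" is allowed to depend on. The clean statement is $C = \tfrac14 \max_{1\le\ell<m\le K}\|[h_\ell,h_m]\|$, or more crudely $C = \tfrac12\bigl(\max_\ell\|h_\ell\|\bigr)^2$ using $\|[h_\ell,h_m]\|\le 2\|h_\ell\|\|h_m\|$; either way $C$ is independent of $K$ and of $\dt$ under the standing assumption that individual Hamiltonian terms have bounded norm. If one instead wanted $C$ to be a truly universal constant one would need a uniform bound $\|h_\ell\|\le h_{\max}$ for all $\ell$, which is the natural reading here. I would flag this so the reader understands the $K^2$ scaling is the essential content and the constant merely collects the per-term commutator norms.

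I do not expect a genuine obstacle: the telescoping is already written out, Theorem \ref{eq: split-error} is exactly the per-step estimate needed, unitary invariance of the norm is immediate, and the rest is the elementary inequality $\binom{K}{2}\le \tfrac12 K^2$. The only mild subtlety is bookkeeping — making sure the indices in the telescoping differences match the $(\ell,m)$ pairs in the double sum and that no commutator pair is double-counted or omitted — which is routine and which I would verify by checking the $K=2$ and $K=3$ cases against the explicit display before the corollary.
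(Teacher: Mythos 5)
Your proof is correct and takes essentially the same route as the paper: the telescoping decomposition already displayed before the corollary, Theorem \ref{eq: split-error} applied to each difference (using unitarity of the prefactors), and the count of $\binom{K}{2}\le K^2/2$ commutator terms combined with the factor $\dt^2/2$. Your constant $C=\frac14\max_{1\le \ell<m\le K}\|[H_\ell,H_m]\|$ is exactly the one the paper states immediately after the corollary.
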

 The constant here can be selected as $C=\frac14 \max_{1\leq j <k \leq K} \| [H_j,H_k]\|.$

\medskip

In the implementation, the unitary operator is evaluated one at a time (if $K=1)$. This amounts to approximating $H_1$ by $H_1+\delta\!H$. Therefore, it corresponds to a simple splitting as follows,
\begin{equation}\label{eq: err2}
 \ket{\chi(\dt)} =   \exp \big( -i \dt H_0 \big) \exp \big( -i \dt H_1 - i \dt \delta\!H \big)  \ket{\psi(0)} -  \exp \big( -i \dt H \big) \ket{\psi(0)}.
\end{equation}
To estimate this one-step error, we separate it into two terms, 
\begin{equation}\label{eq: chi12}
\begin{aligned}
\ket{ \chi_1(\dt) }=&  \exp \big( -i \dt H_1 - i \dt \delta\!H \big)  \ket{\psi(0)} -  \exp \big( -i \dt H_1  \big)  \ket{\psi(0)}, \\
\ket{ \chi_2(\dt) }=&     \exp \big( -i \dt H_0 \big)  \exp \big( -i \dt H_1  \big)  \ket{\psi(0)}  -  \exp \big( -i \dt H \big) \ket{\psi(0)}.
\end{aligned}
\end{equation}
As a result, we have the total one-step error,
\begin{equation}\label{eq: e1step-def}
 \ket{\chi(\dt)} =  \exp \big( -i \dt H_0 \big) \ket{\chi_1(\dt)} +  \ket{\chi_2(\dt)}.
\end{equation}

By using the triangle inequality and the estimates form Theorem 1 and Theorem 3, we arrive the following estimate:

\begin{prop}
The error \eqref{eq: e1step-def} of the one-step approximation can be bounded as follows,
\begin{equation}\label{eq: e1step}
 \braket{ \chi(\dt)}^{\frac12}  \le \int_0^{\dt} \bra{\psi(0)} U_1(t)^\dagger  | \delta\!H^2 | U_1(t) \ket{\psi(0)}^{\frac12} dt + \frac{\dt^2}2 \|[H_0,H_1]\|.
\end{equation}
Here we recall $U_1(t):=\exp \big( -i t H_1  \big) $. As a result, the error has an almost sure bound, 
\begin{equation}\label{eq: e1step-as}
 \braket{ \chi(\dt)}^{\frac12}  \le \dt  \Gamma +   \frac{\dt^2}2 \|[H_0,H_1]\|.
\end{equation}
The constant $\Gamma$ is defined in \eqref{eq: Gamma}.

 In addition, the corresponding MSE can be bounded as follows,
\begin{equation}\label{eq: e1step-mse}
 \mathbb{E}\left[ \braket{ \chi(\dt)} \right] \le 2   \Lambda \dt^2 +  {\dt^4} \|[H_0,H_1]\|^2.
\end{equation}

\end{prop}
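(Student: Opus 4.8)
The plan is to assemble the three displayed bounds from the decomposition \eqref{eq: chi12}--\eqref{eq: e1step-def} together with the two one-step estimates already established: Theorem \ref{eq: non-symmetric}, applied to $\ket{\chi_1(\dt)}$, and Theorem \ref{eq: split-error}, applied to $\ket{\chi_2(\dt)}$. First I would use that $\exp(-i\dt H_0)$ is unitary and hence norm-preserving, so that \eqref{eq: e1step-def} and the triangle inequality give $\|\chi(\dt)\|\le \|\chi_1(\dt)\|+\|\chi_2(\dt)\|$. The term $\ket{\chi_1(\dt)}$ is exactly of the form treated in Theorem \ref{eq: non-symmetric}, with $H$ there replaced by $H_1$ and the Hermitian perturbation taken to be $\dH$; the reference state then evolves under $U_1(t)=\exp(-itH_1)$, so \eqref{eq: ebound1} yields $\|\chi_1(\dt)\|\le \int_0^\dt \bra{\psi(0)}U_1(t)^\dagger \dH^2 U_1(t)\ket{\psi(0)}^{1/2}\,dt$ (here $\dH^2\succeq 0$, so the absolute value in \eqref{eq: e1step} is cosmetic). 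For $\ket{\chi_2(\dt)}$, Theorem \ref{eq: split-error} gives $\|\chi_2(\dt)\|\le \tfrac{\dt^2}{2}\|[H_0,H_1]\|$ directly. Adding the two bounds produces \eqref{eq: e1step}.

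The almost-sure bound \eqref{eq: e1step-as} then follows by replacing the integral term with its crude estimate: Theorem \ref{eq: non-symmetric} also gives $\|\chi_1(\dt)\|\le \dt\|\dH\|$, and $\|\dH\|\le \Gamma$ with probability one by the definition \eqref{eq: Gamma} of $\Gamma$.

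For the MSE \eqref{eq: e1step-mse}, I would square \eqref{eq: e1step} and use $(a+b)^2\le 2a^2+2b^2$ to separate the variance and bias contributions. The bias part contributes $2\cdot\tfrac{\dt^4}{4}\|[H_0,H_1]\|^2$. For the variance part, apply Cauchy--Schwarz to the time integral, $\big(\int_0^\dt f(t)\,dt\big)^2\le \dt\int_0^\dt f(t)^2\,dt$ with $f(t)^2=\bra{\psi(0)}U_1(t)^\dagger \dH^2 U_1(t)\ket{\psi(0)}$. Since $\ket{\psi(0)}$ and $U_1(t)$ are deterministic, taking expectations commutes with the time integral and replaces $\dH^2$ by $\Sigma=\mathbb{E}[\dH^2]$; then $\bra{\psi(0)}U_1(t)^\dagger \Sigma U_1(t)\ket{\psi(0)}\le \|\Sigma\|=\Lambda$ because $U_1(t)$ is unitary and $\braket{\psi(0)}=1$. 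Hence the variance part is bounded by $2\Lambda\dt^2$, and collecting the two contributions gives \eqref{eq: e1step-mse}, in fact with room to spare (the factor on the commutator term can be taken as $\tfrac12$).

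I do not anticipate a genuine obstacle here: the argument is a routine recombination of Theorems \ref{eq: non-symmetric} and \ref{eq: split-error} via unitarity, the triangle inequality, Cauchy--Schwarz, and Jensen's inequality. The only points needing care are (i) recognizing that the reference dynamics for $\ket{\chi_1}$ is driven by $H_1$ alone, so the variance enters through $U_1(t)^\dagger\dH^2U_1(t)$ rather than through the full propagator, and (ii) noting in the MSE computation that the expectation acts only on $\dH^2$, since all state vectors and the unitaries $U_1(t)$ are non-random.
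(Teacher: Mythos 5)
Your argument is correct and is essentially the paper's own route: the paper derives this proposition in one line by combining the decomposition \eqref{eq: e1step-def} with the triangle inequality, unitarity of $\exp(-i\dt H_0)$, Theorem \ref{eq: non-symmetric} applied to $\ket{\chi_1}$ (with $H_1$ as the reference Hamiltonian, giving the $U_1(t)^\dagger\dH^2 U_1(t)$ integrand), and Theorem \ref{eq: split-error} applied to $\ket{\chi_2}$. Your explicit handling of the MSE step via $(a+b)^2\le 2a^2+2b^2$, Cauchy--Schwarz in time, and $\bra{\psi(0)}U_1(t)^\dagger\Sigma U_1(t)\ket{\psi(0)}\le\Lambda$ just fills in details the paper leaves implicit, and reproduces the stated constants (indeed with a spare factor of $\tfrac12$ on the commutator term, as you note).
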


\medskip

\begin{remark}\label{rem}
The first term in \eqref{eq: e1step} is interpreted as the statistical error, while the second term is the usual truncation error.  
It is tempting to use a symmetric splitting, also known as the Strang splitting, to improve the accuracy. In this case, the one-step error is given by,
\begin{equation}\label{eq: symm}
\begin{aligned}
  \ket{\chi(\dt)}  = & \exp \big( -i \dt/2 H_0 \big) \exp \big( -i \dt H_1 - i \dt \delta\!H \big) \exp \big( -i \dt/2 H_0 \big) \ket{ \psi(0)} -  \exp \big( -i \dt H \big) \ket{\psi(0) } \\
  = & \exp \big( -i \dt/2 H_0 \big)  \ket{\chi_1(\dt)} +  \ket{\chi_2(\dt)},   
\end{aligned}
\end{equation}
where we split the error into two error terms, given by,
\[ 
\begin{aligned}
\ket{\chi_1(\dt)} =&\Big(  \exp \big( -i \dt H_1 - i \dt \delta\!H \big)   -  \exp \big( -i \dt H_1 \big) \Big)\exp \big( -i \frac{\dt}2 H_0 \big) \ket{ \psi(0)} \\
\ket{\chi_2(\dt)} =& \exp \big( -i \dt/2 H_0 \big) \exp \big( -i \dt H_1 \big) \exp \big( -i \dt/2 H_0 \big) \ket{ \psi(0)} -  \exp \big( -i \dt H \big) \ket{\psi(0) }.
\end{aligned}\]
While $\chi_2$ can be shown to be $\mathcal{O}(\dt^3)$ \cite{wecker_gate_2014}, the statistical error from $\chi_1$ remains. 

\end{remark}

\begin{remark}

Equation \eqref{eq: e1step-as} shows a competition between the two terms, and it can be used as a guideline to choose the partition $(H_0,H_1).$ This will be discussed in Section \ref{sec: partial}.
\end{remark}

\begin{remark}
When multiple terms ($K>1$) are selected from $H_1$ at each time step, we can first generalize \eqref{eq: dH}
 to $\| \mathbb{E}[\dH^2]\| \leq  \Lambda/K$. As a result, the MSE bound in \eqref{eq: ebound2} and \eqref{eq: e1step-mse} can be improved to $2  \Lambda/K \dt^2.$ But in light of the second error term in \eqref{eq: chi12}, the second term in \eqref{eq: e1step-mse} remains unchanged. 
\end{remark}

\subsection{Analysis of the global error}
We now proceed to estimate the error over multiple steps. 
We set $\ket{\psi_0}=\ket{\phi_0} = \ket{\psi(0)}.$ The approximate wave function can be generated from \eqref{eq: phin}, by applying
\begin{equation}
 |\phi_{n+1} \rangle =  \exp \big( -i \dt H_0 \big) \exp \big( -i \dt H_1 - i \dt \delta\!H_n \big)  |\phi_n\rangle,
\end{equation}
repeatedly. Meanwhile, we let $\ket{\psi_n} = \ket{\psi(n\dt)}$ be the {\it exact} wave function, which from \eqref{eq: psin} follows a similar relation,
\begin{equation}
 |\psi_{n+1} \rangle =  \exp \big( -i \dt H \big)  |\psi_{n} \rangle.
\end{equation}
 Further, the Hamiltonians $\left\{\delta\!H_n\right\}_{n\geq0}$ are i.i.d. samples   following the statistics
\eqref{eq: dH}.

We let 
\begin{equation}\label{eq: en}
 |e_n\rangle = |\psi_{n} \rangle  -  |\phi_{n} \rangle
\end{equation}
 correspond to the numerical error. Following standard procedure for analyzing the accumulation of the error in time \cite{deuflhard2012scientific}, we can derive an error equation, given by,
\begin{equation}\label{eq: en-n+1}
 |e_{n+1}\rangle =   \exp \big( -i \dt H_0 \big) \exp \big( -i \dt H_1 - i \dt \delta\!H_n \big)  |e_n\rangle + | \chi_n \rangle,
\end{equation}
where 
\begin{equation}\label{eq: chi-n}
 \ket{\chi_n} =  \exp \big( -i \dt H \big)  |\psi_{n} \rangle -  \exp \big( -i \dt H_0 \big) \exp \big( -i \dt H_1 - i \dt \delta\!H_n \big)  |\psi_n\rangle,
\end{equation}
can be interpreted as the one-step truncation error.

Estimating the $L^2$ error is quite straightforward, since,
\[ \| e_{n+1} \| \le \| e_{n} \| + \| \chi_{n} \| \Rightarrow \|e_n \| \le \sum_{j=0}^{n-1}  \| \chi_{n} \|.\]

On the other hand, to connect to the MSE, one can start with \eqref{eq: en-n+1}. Using the Cauchy-Schwarz inequality, one finds that
\begin{equation}
\langle e_{n+1} |e_{n+1}\rangle  \le  \langle e_{n} |e_{n}\rangle + 2   \langle e_{n} |e_{n}\rangle^{\frac12} \langle \chi_n | \chi_n \rangle^{\frac12}
+ \langle \chi_n | \chi_n \rangle.
\end{equation}

Using the inequality, $2   \langle e_{n} |e_{n}\rangle^{\frac12} \le 1 +   \langle e_{n} |e_{n}\rangle,$ we arrive at,
\begin{equation}
\langle e_{n+1} |e_{n+1}\rangle  \le  \langle e_{n} |e_{n}\rangle (1 +    \langle \chi_n | \chi_n \rangle^{\frac12})
+ \langle \chi_n | \chi_n \rangle.
\end{equation}

A bound can be found by using the following  discrete Gronwall's inequality \cite{qin2016integral}[Theorem 2.1.3],
\begin{equation}
 u_{n+1} \le (1+ g_n) u_n + f_n, \forall n \ge 0  \Rightarrow u_n \le \sum_{m=0}^{n-1} f_m \prod_{j=m+1}^{n-1} (1+g_j).
\end{equation}

As a result, we have,
\begin{equation}
 \langle e_{n+1} |e_{n+1}\rangle  \le \dsp  \sum_{k=0}^n \langle \chi_k | \chi_k \rangle \exp\big (  \sum_{k=0}^n  \langle \chi_k | \chi_k \rangle^{\frac12} \big )
\end{equation}
This can be combined with the one-step error bounds \eqref{eq: e1step-as} (applied to the exponential) and  \eqref{eq: e1step-mse}.

\begin{theorem}\label{thm: mse}
The mean square of the error \eqref{eq: en} of $\ket{\phi_n}$ generated from \eqref{eq: phin} at $t=t_n$ satisfies the estimate,
\begin{equation}\label{eq: g-mse}
   \mathbb{E} \left[\braket{ e_{n} } \right]   \le  \Big(  2 \Lambda t  \dt + t \dt^3 \|[H_0,H_1]\|  \Big)
   \exp \big( \Gamma t + \frac{t\dt}2 \|[H_0,H_1]\|  \big)
\end{equation}
\end{theorem}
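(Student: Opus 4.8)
The plan is to push the two one-step estimates of the Proposition through the error recursion \eqref{eq: en-n+1}, working throughout with the squared $L^2$ norm so everything meshes with the mean square bounds. Concretely, I would square \eqref{eq: en-n+1}, use that $\exp(-i\dt H_0)\exp(-i\dt H_1 - i\dt\delta\!H_n)$ is unitary together with Cauchy--Schwarz, linearize the cross term via $2\braket{e_n}^{1/2}\le 1+\braket{e_n}$, and apply the discrete Gronwall inequality with $g_n=\braket{\chi_n}^{1/2}$ and $f_n=\braket{\chi_n}$ — steps already carried out in the excerpt — which, since $\ket{e_0}=0$, gives the pathwise bound
\[
 \braket{e_n} \le \Big(\sum_{k=0}^{n-1}\braket{\chi_k}\Big)\exp\Big(\sum_{k=0}^{n-1}\braket{\chi_k}^{1/2}\Big).
\]
What remains is to take expectations, and the only real subtlety is that the prefactor and the exponential depend on the same family $\{\delta\!H_k\}$, so the expectation cannot simply be distributed across the product.

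The resolution is to bound the exponent using the \emph{almost sure} one-step estimate \eqref{eq: e1step-as}, $\braket{\chi_k}^{1/2}\le \dt\Gamma + \tfrac{\dt^2}{2}\|[H_0,H_1]\|$, which holds for every realization; summing over the $n\le t/\dt$ steps and using $n\dt\le t$ replaces the exponential by the \emph{deterministic} quantity $\exp\big(\Gamma t + \tfrac{t\dt}{2}\|[H_0,H_1]\|\big)$. Since this factor is now constant, it pulls out of the expectation and leaves $\sum_{k=0}^{n-1}\mathbb{E}[\braket{\chi_k}]$ to be controlled. Here the one-step MSE bound \eqref{eq: e1step-mse} applies directly: each $\ket{\chi_k}$ is the truncation error of a single step started from the deterministic exact state $\ket{\psi_k}$ and driven by an independent copy of $\delta\!H$, so $\mathbb{E}[\braket{\chi_k}]\le 2\Lambda\dt^2 + \dt^4\|[H_0,H_1]\|^2$. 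Summing over $n\le t/\dt$ steps yields $2\Lambda t\dt + t\dt^3\|[H_0,H_1]\|^2$; under the natural step-size restriction $\dt\|[H_0,H_1]\|\le 1$ the last term is at most $t\dt^3\|[H_0,H_1]\|$, which is the prefactor in \eqref{eq: g-mse}.

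I do not expect a genuine obstacle beyond this bookkeeping; the one load-bearing idea is the \emph{separation} of the two one-step bounds — the almost sure bound \eqref{eq: e1step-as} to make the Gronwall exponential deterministic, and the MSE bound \eqref{eq: e1step-mse} for the additive sum — which is precisely why both versions are established in the Proposition. Two routine checks: the Proposition's estimates were stated with an arbitrary initial state, so they transfer verbatim to each $\ket{\psi_k}$; and since the $\delta\!H_k$ are i.i.d.\ and the $\ket{\psi_k}$ deterministic, all the terms $\mathbb{E}[\braket{\chi_k}]$ coincide, so the sum is just $n$ copies of the one-step value, and the identification $n\dt\le t$ closes the estimate.
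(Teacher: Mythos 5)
Your proof is correct and follows essentially the same route as the paper: the squared recursion with Cauchy--Schwarz, the linearization $2\braket{e_n}^{1/2}\le 1+\braket{e_n}$, the discrete Gronwall inequality, and then the almost sure bound \eqref{eq: e1step-as} to render the Gronwall exponential deterministic so that only the summed prefactor needs \eqref{eq: e1step-mse} in expectation. Your explicit observation that summing \eqref{eq: e1step-mse} yields $t\dt^3\|[H_0,H_1]\|^2$ rather than the stated $t\dt^3\|[H_0,H_1]\|$, so that the restriction $\dt\,\|[H_0,H_1]\|\le 1$ (or a corrected power in \eqref{eq: g-mse}) is needed, is a point the paper passes over silently.
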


\begin{remark}
The two terms $2 \Lambda t  \dt + t \dt^3 \|[H_0,H_1]\|$ in the error bound can be viewed as the typical balance between the variance and the bias (also known as the bias-variance tradeoff \cite{ramsay2002applied}). In particular, the constant $\Lambda$ is from the variance $\Sigma$ \eqref{eq: Sigma} of a Monte Carlo sampling of the Hamiltonian terms, while $ t \dt^3 \|[H_0,H_1]\|$ comes from the splitting error (Theorem \ref{eq: non-symmetric}). In terms of the step size $\dt$, the bias is of much higher order.
\end{remark}

\smallskip

Finally, we can consider the general case $1 \le K \ll N_r. $ Following a similar analysis, we find that,
\begin{equation}\label{eq: msek}
   \mathbb{E} \left[\braket{ e_{n} } \right]   \le  \Big( 2 \frac{1}{K}  \Lambda t  \dt + \|[H_0,H_1]\|  t \dt^3 \Big)
   \exp \big( \frac{N_r-K}{N_r} \Gamma  t +  \frac{t\dt}2 \|[H_0,H_1]\| \big)
\end{equation}

\subsection{The mean of the random approximation}
Similar to the analysis of fully random algorithms \cite{campbell2019random,chen2020quantum},
an error bound can be obtained for the fluctuation of the random wave function. 
\begin{prop}\label{eq: mean-1step}
  Let $\ket{\phi_1} $ be the random wave function in \eqref{eq: phin}. Then, the following inequality holds almost surely,
  \begin{equation}\label{eq: err-mean}
   \left\| \ket{\psi_1}  - \mathbb{E}[\ket{\phi_1} ] \right\| \leq  \frac{\dt^2}2  \|[H_0,H_1]\| + \frac{\dt^2}2 \mathbb{E}\left[ \| (H_1+\dH) \dH^2 (H_1+\dH) \| \right]^{\frac12} .
\end{equation}  
\end{prop}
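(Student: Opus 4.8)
The plan is to decompose $\ket{\psi_1}-\mathbb{E}[\ket{\phi_1}]$ into a deterministic operator–splitting error, handled by Theorem~\ref{eq: split-error}, and a genuine randomization error, handled by a Duhamel expansion that exploits $\mathbb{E}[\dH]=0$. Since $H_0$ is deterministic, $\mathbb{E}[\ket{\phi_1}]=e^{-i\dt H_0}\,\mathbb{E}\!\left[e^{-i\dt(H_1+\dH)}\right]\ket{\psi(0)}$; I would insert $e^{-i\dt H_0}e^{-i\dt H_1}\ket{\psi(0)}$ and use the triangle inequality together with the unitarity of $e^{-i\dt H_0}$ to get
\begin{equation*}
\left\|\ket{\psi_1}-\mathbb{E}[\ket{\phi_1}]\right\|\le\left\|\left(e^{-i\dt H}-e^{-i\dt H_0}e^{-i\dt H_1}\right)\ket{\psi(0)}\right\|+\left\|\left(e^{-i\dt H_1}-\mathbb{E}[e^{-i\dt(H_1+\dH)}]\right)\ket{\psi(0)}\right\|.
\end{equation*}
The first term is exactly the one-step non-symmetric splitting error of Theorem~\ref{eq: split-error}, so it is at most $\tfrac{\dt^2}{2}\|[H_0,H_1]\|$, and it remains to bound the second term, which I call $\|\eta(\dt)\|$ with $\eta(\dt):=\big(\mathbb{E}[e^{-i\dt(H_1+\dH)}]-e^{-i\dt H_1}\big)\ket{\psi(0)}$.

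For $\eta(\dt)$ the key point is that a single Duhamel expansion only yields an $\mathcal O(\dt)$ bound, so one must use $\mathbb{E}[\dH]=0$ to gain an extra power of $\dt$. I would write $e^{-i\dt(H_1+\dH)}-e^{-i\dt H_1}=\big(e^{-i\dt(H_1+\dH)}e^{i\dt H_1}-I\big)e^{-i\dt H_1}$ and differentiate the interaction-picture factor $w(t)=e^{-it(H_1+\dH)}e^{itH_1}$, for which $w'(t)=-i\,e^{-it(H_1+\dH)}\dH\,e^{itH_1}$, obtaining
\begin{equation*}
\mathbb{E}\!\left[e^{-i\dt(H_1+\dH)}\right]-e^{-i\dt H_1}=-i\int_0^{\dt}\mathbb{E}\!\left[e^{-it(H_1+\dH)}\dH\right]e^{-i(\dt-t)H_1}\,dt.
\end{equation*}
Then I would center the inner expectation at $t=0$: since $\mathbb{E}[\dH]=0$, one may replace $e^{-it(H_1+\dH)}\dH$ by $\big(e^{-it(H_1+\dH)}-I\big)\dH$ under $\mathbb{E}[\cdot]$, and apply $e^{-it(H_1+\dH)}-I=-i\int_0^t e^{-is(H_1+\dH)}(H_1+\dH)\,ds$, using that $H_1+\dH$ commutes with its own exponential. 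This produces the second-order representation
\begin{equation*}
\eta(\dt)=-\int_0^{\dt}\!\int_0^t\mathbb{E}\!\left[e^{-is(H_1+\dH)}(H_1+\dH)\,\dH\,e^{-i(\dt-t)H_1}\ket{\psi(0)}\right]ds\,dt.
\end{equation*}

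To conclude, I would move the norm inside the expectation and the double integral; the factors $e^{-is(H_1+\dH)}$ and $e^{-i(\dt-t)H_1}$ are unitary and $\|\ket{\psi(0)}\|=1$, so the integrand is bounded pointwise by $\|(H_1+\dH)\dH\|$, and integration over $0\le s\le t\le\dt$ gives $\|\eta(\dt)\|\le\tfrac{\dt^2}{2}\,\mathbb{E}\big[\|(H_1+\dH)\dH\|\big]$. Jensen's inequality then yields $\mathbb{E}[\|(H_1+\dH)\dH\|]\le\mathbb{E}[\|(H_1+\dH)\dH\|^2]^{1/2}$, and since $H_1$ and $\dH$ are Hermitian, $\|(H_1+\dH)\dH\|^2=\|(H_1+\dH)\dH\,\dH(H_1+\dH)\|=\|(H_1+\dH)\dH^2(H_1+\dH)\|$; combining with the splitting bound gives \eqref{eq: err-mean}. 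The main obstacle, and the only genuinely delicate step, is the double use of the mean-zero property: the perturbation expansion has to be centered at the identity so that the $\mathcal O(\dt)$ term drops out in expectation, which is exactly what produces the $\dt^2$ scaling and the particular operator combination $(H_1+\dH)\dH^2(H_1+\dH)$ rather than the cruder $\|H_1+\dH\|^2\|\dH\|^2$; getting the commuting factor $(H_1+\dH)$ adjacent to $\dH$ (instead of splitting the norm) is what matches the stated form.
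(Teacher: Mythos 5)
Your proposal is correct and follows essentially the same route as the paper's appendix proof: the same triangle-inequality split into the deterministic splitting error (Theorem \ref{eq: split-error}) plus the randomization error, followed by a Duhamel/variation-of-constants expansion that uses $\mathbb{E}[\dH]=0$ to cancel the first-order term and a second integration to produce the $(H_1+\dH)\dH$ factor, with Jensen's inequality and Hermiticity giving the stated $\mathbb{E}\big[\|(H_1+\dH)\dH^2(H_1+\dH)\|\big]^{1/2}$. The only cosmetic difference is that you work in the interaction picture while the paper writes the equivalent ODE for the error and applies the variation-of-constants formula.
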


We leave the proof to the appendix. This estimate can be extended to time $t=n\dt$ \cite{chen2020quantum} using the fact that for a unitary matrix $V$, $\|\mathbb{E}[V] \|\leq 1. $ More specifically, we can split the error as follows,
\[ 
\begin{aligned}
\left\| \ket{\psi_n} - \mathbb{E}[ \ket{\phi_n}] \right\|  & \leq \left \|U(\dt) \ket{\psi_{n-1}} - \mathbb{E} [V_{n-1}] \ket{\psi_{n-1}} \right\|  + \left\| \mathbb{E} [V_{n-1}] \ket{\psi_{n-1}} -  \mathbb{E} [V_{n-1}] \mathbb{E}[ \ket{\phi_{n-1}} ] \right\|  \\  
&\leq \left \|U(\dt) \ket{\psi_{n-1}} - \mathbb{E} [V_{n-1}] \ket{\psi_{n-1}} \right\|  + \left\| \ket{\psi_{n-1}} -  \mathbb{E}[ \ket{\phi_{n-1}} ] \right\|  \\  
&\leq  ...   \\
 \Longrightarrow\left\| \ket{\psi_n} - \mathbb{E}[ \ket{\phi_n}] \right\| &\leq   \frac{t \dt}2  \|[H_0,H_1]\| + \frac{t \dt}2 \mathbb{E}\left[ \| (H_1+\dH) \dH^2 (H_1+\dH) \|\right]^{1/2}. 
\end{aligned}
 \]  

Therefore the error grows linearly and the error bound is independent of the variance $\Lambda$.

\section{Gate Counts using Error bounds in probability }\label{sec: gate}
A very useful estimate in \cite{campbell2019random} is the upper bounds for the gate counts, compared to several deterministic methods. Chen et al. \cite{chen2020quantum} used matrix concentration inequalities \cite{tropp2015introduction} and improved the gate counts. 
To obtain a gate bound for the algorithm \eqref{alg: rand1}, we first assume that $K=1$. We let $N_d=L-N_r$. In this case, since $N_d+1$ Hamiltonian terms are evaluated at each time step, we set $N_{Gate}= \mathcal{O}\big(n (N_d+1)\big)$, with $n$ being the number of time steps.

Let us first assume that in \eqref{eq: g-mse}, the stochastic error dominates. By using Markov inequality, one gets an estimate,
\begin{equation}
 \mathbb{P} \big( \|e_n\| < \epsilon \big) \geq 1 - \frac{\mathbb{E}    \left[\braket{ e_{n} } \right]  }{\epsilon^2}.
\end{equation}  

By combining with Theorem \ref{thm: mse}, we arrive at the following gate counts,
\begin{prop}
If the random algorithm \eqref{alg: rand1} with uniform sampling is applied with the number of gates,
\begin{equation}
  N_{Gate}= \mathcal{O}\left(\frac{ (N_d+1) \Lambda t^2}{ \epsilon^2 \delta}\right),
\end{equation}
then, with probability at least $1-\delta$, the solution error satisfies,
\begin{equation}
   \|e_n\| < \epsilon.
\end{equation}
\end{prop}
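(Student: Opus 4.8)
The plan is to turn the mean-square error estimate of Theorem~\ref{thm: mse} into a high-probability statement by Markov's inequality, and then to solve for the step size — hence the number of steps, hence the gate count — that forces the failure probability below $\delta$. Concretely, the event $\|e_n\| \ge \epsilon$ is the event $\braket{e_n} \ge \epsilon^2$, so Markov's inequality gives $\mathbb{P}(\|e_n\| \ge \epsilon) \le \mathbb{E}[\braket{e_n}]/\epsilon^2$, and it therefore suffices to arrange $\mathbb{E}[\braket{e_n}] \le \epsilon^2 \delta$.

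Next I invoke Theorem~\ref{thm: mse}. Under the standing assumption that the stochastic contribution dominates — i.e. $2\Lambda t\dt$ dominates $t\dt^3 \|[H_0,H_1]\|$, which holds once $\dt$ is small enough — and that the exponential prefactor $\exp\bigl(\Gamma t + \tfrac{t\dt}{2}\|[H_0,H_1]\|\bigr)$ is $\mathcal{O}(1)$ in the regime of interest, the bound collapses to $\mathbb{E}[\braket{e_n}] \le C\,\Lambda\, t\, \dt$ for an absolute constant $C$. Imposing $C\Lambda t\dt \le \epsilon^2\delta$ forces $\dt \le \epsilon^2\delta/(C\Lambda t)$, equivalently $n = t/\dt \ge C\Lambda t^2/(\epsilon^2\delta)$, so that $n = \mathcal{O}\bigl(\Lambda t^2/(\epsilon^2\delta)\bigr)$ steps suffice.

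Finally I translate the step count into a gate count. As noted in the text, each step of Algorithm~\ref{alg: rand1} applies the $N_d$ exponentials making up $U_0$ via \eqref{eq: U0} together with one randomly selected exponential from $H_1$, i.e. $N_d+1$ elementary unitaries, each implementable with $\mathcal{O}(1)$ gates by assumption; hence $N_{Gate} = \mathcal{O}\bigl(n(N_d+1)\bigr) = \mathcal{O}\bigl((N_d+1)\Lambda t^2/(\epsilon^2\delta)\bigr)$, and with this many gates $\mathbb{P}(\|e_n\| < \epsilon) \ge 1-\delta$, as claimed.

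The main obstacle I anticipate is not the Markov step, which is routine, but making the two regime assumptions precise and mutually consistent. The variance term dominates the bias only when $\dt^2\|[H_0,H_1]\| \lesssim \Lambda$, while the exponential prefactor is harmless only when $\Gamma t = \mathcal{O}(1)$; since $\Gamma = N_r \max_\ell \|h_\ell\|$ by \eqref{eq: Gamma'} grows with the cutoff $N_r$, one must either restrict $t$ relative to $\Gamma^{-1}$ or explicitly carry an $N_r$-dependent factor through the constant. Tracking these parameter dependencies carefully, rather than the algebra, is where the care lies; the rest is substitution and rearrangement.
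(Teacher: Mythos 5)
Your argument is essentially identical to the paper's: Markov's inequality applied to $\mathbb{E}[\braket{e_n}]$ from Theorem~\ref{thm: mse}, the assumption that the variance term $2\Lambda t\,\dt$ dominates (with the exponential prefactor absorbed into the constant), and the per-step cost of $N_d+1$ exponentials giving $N_{Gate}=\mathcal{O}(n(N_d+1))$. Your closing caveat about the $\Gamma t$ dependence hidden in the exponential prefactor is a fair observation that the paper also leaves implicit, but it does not change the route of the proof.
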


If both terms in \eqref{eq: g-mse} are included, one can set each term $\epsilon^2 \delta/2,$ yielding the gate count,
\begin{equation}\label{eq: gc}
 N_{Gate}= \mathcal{O}\left( \max\Big\{ \frac{ (N_d+1) \Lambda t^2}{ \epsilon^2 \delta},  (N_d +1) \sqrt[3]{\frac{ 2 t^4 \|[H_0,H_1]\| }{\epsilon^2 \delta} } \Big\}  \right).
\end{equation}

\medskip

Another bound can be obtained using the McDiarmid's inequality \cite{mcdiarmid1989method}:
\begin{lemma}
 Let $x_1, x_2, \cdots, x_n$ be i.i.d. random variables, and let $f$ be a real-valued function that satisfies the bound,
 \begin{equation}
  \left|  f(x_1, x_2, \cdots, x_i, \cdots, x_n) - f(x_1, x_2, \cdots, x_i', \cdots, x_n)  \right| \le c_i, \; \forall 1 \le i \le n. 
\end{equation}
Then, for any $\epsilon>0$, the following concentration inequality holds,
 \begin{equation}\label{eq: mcd}
 \mathbb{P}\left( \left|f - \mathbb{E}[f] \right|  > \epsilon \right)  \le  \exp \left(- \frac{2\epsilon^2}{\sum_i^n c_i^2} \right). 
\end{equation}
\end{lemma}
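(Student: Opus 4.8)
The plan is to reduce the statement to the Azuma--Hoeffding martingale inequality via the Doob martingale associated with $f$. First I would define, for $0 \le i \le n$, the conditional expectations $Z_i := \mathbb{E}[f(x_1,\dots,x_n) \mid x_1,\dots,x_i]$, so that $Z_0 = \mathbb{E}[f]$ and $Z_n = f(x_1,\dots,x_n)$; by the tower property $(Z_i)_{i=0}^n$ is a martingale with respect to the filtration generated by $(x_1,\dots,x_i)$. The telescoping identity $f - \mathbb{E}[f] = \sum_{i=1}^n D_i$ with $D_i := Z_i - Z_{i-1}$ then reduces the problem to controlling the martingale increments $D_i$.

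The key step is to show that, conditionally on $x_1,\dots,x_{i-1}$, the increment $D_i$ takes values in an interval of length at most $c_i$. Using independence, $Z_i$ can be written as $g_i(x_1,\dots,x_i)$ for the function obtained by integrating $f$ over $x_{i+1},\dots,x_n$ against their (product) law, and then $Z_{i-1} = \mathbb{E}[g_i(x_1,\dots,x_{i-1},x_i') \mid x_1,\dots,x_{i-1}]$, where $x_i'$ is an independent copy of $x_i$. Setting $L_i := \inf_{u} g_i(x_1,\dots,x_{i-1},u) - Z_{i-1}$ and $U_i := \sup_u g_i(x_1,\dots,x_{i-1},u) - Z_{i-1}$, the bounded-differences hypothesis — together with the fact that perturbing $x_i$ does not alter the law of the remaining coordinates — gives $U_i - L_i \le c_i$, while plainly $L_i \le D_i \le U_i$ and $\mathbb{E}[D_i \mid x_1,\dots,x_{i-1}] = 0$.

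Next I would invoke Hoeffding's lemma: a random variable $D$ with zero conditional mean and conditional range bounded by $c_i$ satisfies $\mathbb{E}[e^{\lambda D} \mid \mathcal{F}] \le e^{\lambda^2 c_i^2/8}$ for every $\lambda \in \mathbb{R}$. Applying this to $D_i$ conditioned on $x_1,\dots,x_{i-1}$ and peeling off the increments one at a time via the tower property yields $\mathbb{E}\big[e^{\lambda(f - \mathbb{E}[f])}\big] \le \exp\big(\tfrac{\lambda^2}{8}\sum_{i=1}^n c_i^2\big)$. A Chernoff bound $\mathbb{P}(f - \mathbb{E}[f] > \epsilon) \le e^{-\lambda \epsilon}\,\mathbb{E}[e^{\lambda(f-\mathbb{E}[f])}]$, optimized at $\lambda = 4\epsilon/\sum_i c_i^2$, gives the one-sided estimate $\exp(-2\epsilon^2/\sum_i c_i^2)$; running the same argument for $-f$ (which has the same bounded-differences constants) and combining the two tail events then yields the stated two-sided bound.

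The main obstacle is the second step — verifying that the conditional range of $D_i$ is at most $c_i$. This is precisely where independence of the $x_j$ enters, since it is what guarantees that the law of $(x_{i+1},\dots,x_n)$ is unchanged when $x_i$ is perturbed; carrying it out cleanly requires expressing the conditional expectations as integrals against the product measure and then applying the hypothesis uniformly in the perturbed coordinate. The remaining ingredients (Hoeffding's lemma and the Chernoff optimization) are standard and routine.
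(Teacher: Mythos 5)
The paper does not actually prove this lemma: it is quoted directly from McDiarmid's survey \cite{mcdiarmid1989method} as a known concentration result, so there is no in-paper argument to compare yours against. On its own merits, your proposal is the standard and correct proof of the bounded-differences inequality: the Doob martingale $Z_i=\mathbb{E}[f\mid x_1,\dots,x_i]$, the use of independence to show each increment $D_i$ has conditional range at most $c_i$ (this is indeed the only place independence enters, and your integral-against-the-product-measure argument handles it correctly), Hoeffding's lemma for the conditional moment generating function, and the optimized Chernoff bound at $\lambda=4\epsilon/\sum_{i=1}^n c_i^2$. The one point to tighten is the very last step: combining the two one-sided tails by a union bound yields $\mathbb{P}\left(\left|f-\mathbb{E}[f]\right|>\epsilon\right)\le 2\exp\left(-2\epsilon^2/\sum_{i=1}^n c_i^2\right)$, i.e., the two-sided inequality carries a factor of $2$ that the statement as printed omits (a common slip in how this inequality is quoted; only the one-sided version holds without the $2$). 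Your argument is sound, but be aware that it establishes the two-sided bound with that extra factor rather than the literal display in the lemma; this has no effect on the gate-count asymptotics the paper derives from it.
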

Various extensions of this inequality have been developed \cite{rio2013mcdiarmid,mcdiarmid1998concentration}. But here we will simply follow
\eqref{eq: mcd}. Let us first assume that the random algorithm uses the uniform sampling \eqref{eq: unif}.  
To use the McDiarmid's inequality, we observe that with the random algorithm the wave function at the $n$th step can be written in a product form \eqref{eq: phin}, where
the unitary matrices $V_m$'s are independently generated.
Next, let $\wt{V}_m$ be another realization of $V_m$, where the coefficients in  $\omega^m$ \eqref{eq: delH} with the $\ell$th entry being  1  has been replaced by $\wt{\omega}^m$ with the $\ell'$th entry being  1. 
From \eqref{eq: delH}, Theorem \ref{eq: non-symmetric}  immediately implies that,
\begin{equation}
 \| V_m - \wt{V}_m \| \le \dt N_r \big(\|h_\ell\| + \|h_{\ell'}  \|\big)  \le 2 \dt \Gamma.
\end{equation}
Here we have considered the case when the Hamiltonian is sampled according to the uniform distribution and $K=1$ so that we can use the bound \eqref{eq: Gamma'}.

By a substitution into \eqref{eq: phin},  and let $\ket{\wt{\phi}_n}=V_{n-1}  \cdots \wt{V}_{m} \cdots V_0 \ket{\psi_0}, $ we get,
\begin{equation}
 \left\| \ket{\phi_n}-  \ket{\wt{\phi}_n}  \right\| \le 2 \dt \Gamma.
\end{equation}

In accordance with the McDiarmid's inequality,  if we choose the function 
\begin{equation}
f_n=1- \text{Re} \braket{\psi_n}{ \phi_n}, 
\end{equation}
which is related to the fidelity \cite{nielsen2002quantum}. In particular, 
 we observe that the error \eqref{eq: en} is related to $f_n$ by $\braket{e_n}=2f_n.$  
With this choice for $f$, we find that $$\sum_{i=1}^n c_i^2 =4 n \dt^2 \Gamma^2. $$

Gathering these estimates we have,
\[ \displaystyle  \mathbb{P}\left(\Big|f_n - \mathbb{E}[f_n] \Big| > \epsilon\right) \leq \exp\left(- \frac{\epsilon^2}{2 n \dt^2  \Gamma^2}\right).  \]
We can set the right hand side to $\delta$, and obtain the gate count,
\begin{equation}
  N_{Gate} = \mathcal{O} \left(-  \frac{(N_d+1)\ln(\delta) t^2   \Gamma^2 }{\epsilon^2} \right),
\end{equation}
which guarantees that $\Big|f_n - \mathbb{E}[f_n] \Big| < \epsilon$ with probability at least $1-\delta.$ This gate count involves $\Gamma$, which scales linearly with $N_r.$ But
this estimate is rather crude. In particular, the bounded differences used an almost sure bound. This can be improved by using an extension of the McDiamid's inequality \cite{kutin2002extensions},
where the bounded differences only need to hold with large probability.

\medskip

Let us now turn to the case with the important sampling \eqref{eq: pj1}. Using the bound \eqref{eq: Sigma'}, we have, 
\[ V_m= \exp\left(-it \frac{h_\ell}{p_\ell} \right), \quad p_\ell= \frac{\|h_\ell\|}{\sqrt{\Lambda}}, \]
which implies that the bounded differences are given by,
\begin{equation}
 \left\| \ket{\phi_n}-  \ket{\wt{\phi}_n}  \right\|  \le 2 \dt \sqrt{\Lambda}.
\end{equation}
Therefore, we arrive at a gate count estimate where the almost sure bound $\Gamma$ is replaced by $\Lambda$ \eqref{eq: Lambda}. The estimate is similar to  those obtained in \cite{chen2020quantum}.


\begin{theorem}\label{thm: gate-counts}
For any $\delta >0,$ if the random algorithm \eqref{eq: phin} is implemented with important sampling with the number of gates given by,
\begin{equation}
  N_{Gate} = \mathcal{O} \left(-  \frac{(N_d+1)\ln(\delta) t^2   \Lambda }{4\epsilon^2} \right),
\end{equation}
then the follow inequality holds,
\begin{equation}
 \mathbb{P}\left( \big|f_n - \mathbb{E}[f_n] \big|  < \epsilon \right) > 1 - \delta.
\end{equation}

\end{theorem}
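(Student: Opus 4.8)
The plan is to assemble the theorem directly from the McDiarmid machinery already set up in the preceding paragraphs, so the proof will be short: it is essentially a matter of verifying the bounded-differences constant for the important-sampling variant of the algorithm and then plugging into \eqref{eq: mcd}. First I would fix the function $f_n = 1 - \operatorname{Re}\braket{\psi_n}{\phi_n}$ as above, recall that $\ket{\phi_n}$ has the product form \eqref{eq: phin} with the $V_m$'s generated independently (now with $V_m = \exp(-i\dt\, h_{\ell_m}/p_{\ell_m})$, $p_\ell = \|h_\ell\|/\sqrt{\Lambda}$), and regard $f_n$ as a function of the $n$ independent choices made at the successive steps. The key estimate to record is that replacing the $m$th choice by another realization $\wt V_m$ changes the state by at most $\| \ket{\phi_n} - \ket{\wt\phi_n}\| \le \|V_m - \wt V_m\| \le 2\dt\sqrt{\Lambda}$, since unitary conjugation preserves the norm and $\|\exp(-i\dt h_\ell/p_\ell) - \exp(-i\dt h_{\ell'}/p_{\ell'})\| \le \dt(\|h_\ell\|/p_\ell + \|h_{\ell'}\|/p_{\ell'}) = 2\dt\sqrt{\Lambda}$ by the choice of $p_\ell$. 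Because $|f_n(\cdots) - f_n(\cdots')| \le \| \ket{\phi_n} - \ket{\wt\phi_n}\|$ (the map $\ket\phi \mapsto 1 - \operatorname{Re}\braket{\psi_n}{\phi}$ is $1$-Lipschitz), we get the bounded-differences constants $c_i = 2\dt\sqrt{\Lambda}$ for every $i$, hence $\sum_{i=1}^n c_i^2 = 4 n \dt^2 \Lambda$.

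Next I would substitute into McDiarmid's inequality \eqref{eq: mcd} to obtain
\begin{equation}
\mathbb{P}\big(|f_n - \mathbb{E}[f_n]| > \epsilon\big) \le \exp\Big(-\frac{2\epsilon^2}{4 n \dt^2 \Lambda}\Big) = \exp\Big(-\frac{\epsilon^2}{2 n \dt^2 \Lambda}\Big).
\end{equation}
Setting the right-hand side equal to $\delta$ and solving for the number of steps gives $n = -\,\epsilon^2 / \big(2 \dt^2 \Lambda \ln\delta\big)$; using $\dt = t/n$ this becomes $n = -\,\ln(\delta)\, t^2 \Lambda / (2\epsilon^2)$ after rearranging. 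Since each step costs $N_d+1$ gates (the $N_d$ deterministic terms in $H_0$ plus the single randomly chosen term), $N_{Gate} = \mathcal{O}(n(N_d+1)) = \mathcal{O}\big(-\ln(\delta)(N_d+1) t^2 \Lambda / \epsilon^2\big)$, matching the stated bound up to the harmless constant. Complementing the event then yields $\mathbb{P}(|f_n - \mathbb{E}[f_n]| < \epsilon) > 1 - \delta$, which is the claim.

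The only genuinely delicate point — the rest is bookkeeping — is the Lipschitz reduction from $f_n$ to the state norm together with the norm bound on $\|V_m - \wt V_m\|$. For the latter one writes $\exp(-i\dt A) - \exp(-i\dt B) = -i\dt\int_0^1 e^{-i s \dt A}(A-B)e^{-i(1-s)\dt B}\,ds$ (the same integral-representation argument underlying Theorem~\ref{eq: non-symmetric}), so $\|V_m - \wt V_m\| \le \dt\|h_\ell/p_\ell - h_{\ell'}/p_{\ell'}\| \le \dt(\|h_\ell\|/p_\ell + \|h_{\ell'}\|/p_{\ell'})$; the important-sampling choice $p_\ell = \|h_\ell\|/\sqrt\Lambda$ makes each term exactly $\dt\sqrt\Lambda$, which is precisely why important sampling removes the system-size-dependent factor $\Gamma$ that appeared in the uniform case. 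I would present this computation compactly, citing Theorem~\ref{eq: non-symmetric} for the integral bound, and then invoke McDiarmid's lemma as a black box. No new obstacle arises beyond what has already been handled in the uniform-sampling discussion preceding the theorem.
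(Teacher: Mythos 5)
Your proposal is correct and follows essentially the same route as the paper: the bounded-differences constant $c_i = 2\dt\sqrt{\Lambda}$ obtained from the important-sampling choice $p_\ell = \|h_\ell\|/\sqrt{\Lambda}$, followed by McDiarmid's inequality \eqref{eq: mcd} applied to $f_n$ and the conversion $n \mapsto N_{Gate}$ via the $(N_d+1)$ gates per step. The only discrepancies are constant factors inside the $\mathcal{O}(\cdot)$, which are harmless; your explicit justification of the $1$-Lipschitz reduction and of $\|V_m - \wt{V}_m\|$ via the integral representation is in fact slightly more detailed than what the paper records.
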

Again we have assume that the variance term in the MSE is dominant. One can also include both terms and derive a gate estimate similar to \eqref{eq: gc}.
This theorem provides an estimate of the number of gates that is needed in order for the error to be within certain threshold $\epsilon$ with high probability.

\section{Partially Random Algorithms}\label{sec: partial}
Motivated by the analysis from previous sections, one can see that the random algorithm has an MSE that is proportional to $ \Lambda \dt,$  which can be rather large when some of the Hamiltonian terms have large magnitude. On the other hand, a deterministic algorithm, e.g., the ones based on direct operator splitting, has an error at most proportional to $\dt^2$ \cite{childs2021theory}. This may suggest that the latter approach is more accurate. But the comparison should be made within the same computational complexity, e.g., on the ground that one is using the same number of gates. Here we denote it by $N_{Gates}.$ For instance, for a fully random algorithm, where only one Hamiltonian is selected at a step, we have $\dt = T/N_{Gates}$. Meanwhile, for a fully deterministic method, all $L$ Hamiltonians are evaluated, which gives, $ \dt =L T/N_{Gates}.$ Therefore, for a large system with many terms in the Hamiltonian, the step size is rather large, which leads to large error. 

In light of the tradeoff, we will consider partially random algorithms, which can be considered as a hybrid of the two. Recall that $ N_d = L- N_r$. We then partition the total Hamiltonian according to \eqref{eq: H-0} and \eqref{eq: H-1}.
The unitary operator $U_0= \exp\left( -i \dt H_0\right) $ is treated using an operator-splitting \eqref{eq: U0}. 
Meanwhile, the remaining Hamiltonians, with relatively smaller magnitude, will be treated randomly. At each step, one picks up $N_d$ unitary operator from \eqref{eq: U0} and $K$ unitary operators from  $H_1$ in \eqref{eq: H-1}. This suggests that we choose the step size as follows,
\begin{equation}\label{eq: dt}
    \dt = (N_d+K) T/N_{Gate}.
\end{equation}

Intuitively, choosing a larger $N_d$ will reduce the variance $\Lambda$, at the expense of increasing the splitting error.  To understand the error of the hybrid method, let us first recall $U=\exp\left(-i\dt H\right)$ and $U_1=\exp\left(-i\dt H_1\right)$, alongside with its random approximation $V_1=\exp\left(-i\dt h_\ell/p_\ell \right),$ with probability $p_\ell,$ $ 
\ell=1,2, \cdots, N_r$.  Therefore, the one-step  error can be decomposed as follows,
\begin{equation}
  \ket{\chi} = V_0 V_1 \ket{\psi}  - U \ket{\psi} =: \ket{\chi_1} + \ket{\chi_2}, 
\end{equation}
where we have defined,
\begin{equation}
\begin{aligned}
  \ket{\chi_1}&= V_0 V_1 \ket{\psi}  - V_0 U_1 \ket{\psi}, \\
    \ket{\chi_2}&= V_0 U_1 \ket{\psi}  -  U \ket{\psi}.
\end{aligned}
\end{equation}

Due to the fact that $V_0$ in \eqref{eq: U0} is unitary, the MSE from $\ket{\chi_1}$ follows from the estimate  \eqref{eq: e1step-mse}. Meanwhile, the approximation in $\ket{\chi_2}$ is a standard operator-splitting, and in light of  Corollary \ref{cor-split-error}, the square error is given by,
\[
   \braket{\chi_2} \leq C \dt^4.\]
   Here, following the BCH formula, we will choose 
\begin{equation}\label{eq: cnst}
C= \| Q^2 \|/4, \quad Q:=[H_1,H_0]+\sum_{N_r +1 \leq j<l \leq L} [h_j,h_\ell].
\end{equation}

We now examine the cross term, $\braket{\chi_1}{\chi_2}.$   We notice that $\ket{\chi_2}$ is deterministic and it is of  order $\dt^2.$ On the other hand, according to 
Proposition 7, the mean of $\chi_1$ is also of  order $\dt^2.$ Therefore, the cross term is also of the order $\dt^4,$ and it can be absorbed into the term $\braket{\chi_2}$ term.  Based on the above heuristic analysis, we consider the following bound, 
\begin{equation}\label{eq: err-estimator}
  MSE \le \Lambda \dt^2  + C \dt^4.
  \end{equation}
For a fully random algorithm, we have $C=0$; But for a deterministic method,  $\Lambda=0,$ suggesting that a balance might be struck between the two to minimize the total MSE. It is helpful to look at the error while fixing a time $T$ and the number of gates $N_{Gate}$. Combining \eqref{eq: dt} and \eqref{eq: err-estimator}, we have the MSE at time $t=T,$
\begin{equation}\label{eq: err-Nd}
    MSE(T) \leq \frac{\Lambda(N_d) \big(N_d+K\big) T^2}{N_{Gates}} + \frac{ C(N_d+K)^3 T^4}  {N_{Gates}^3}.
\end{equation}
Here we have explicitly indicated the dependence of $\Lambda$ on $N_d$. Therefore, the optimal partition \eqref{eq: H0H1} amounts to finding $N_d$ so that the above error is minimized.

\section{Numerical Tests}\label{sec: tests}

\subsection{The mean square error}
We conducted a number of numerical experiments to test the MSE bounds in Theorem \ref{thm: mse} and equation \eqref{eq: msek}.  
In these experiments, we start with $\ket{\psi(0)} $ as the ground state with the ground state energy $\veps,$
with Hartree as the unit.  The exact solution is given by $$\ket{\psi(t)}= \exp(-i \veps t) \ket{\psi(0)}.$$
Meanwhile, the approximate solution, denoted by $\ket{\phi_n}$ is generated from \eqref{eq: phin}.
To compute the MSE,  we use 80 ensembles. All simulations are performed on classical computers using MATLAB.

As a test example, we consider the methane molecule. The Hamiltonian is obtained from OpenFermions \cite{mcclean2020openfermion}.  In the implementation, we use the Jordan-Wigner transformation \cite{jordan1928pauli,whitfield2011simulation} 
 to represent each operator as a linear transformation on $\mathbb{C}^{2^n}$.    After combining each Hamiltonian term with its transpose, we have 179 quartic terms ($L=179$). On the other hand, the quadratic terms are lumped into $H_0$. To examine the error bounds, we implement $U_0$ \eqref{eq: U0} exactly to focus on the stochastic error and the splitting error between $H_0$ and $H_1$. The ground state energy estimated from this Hamiltonian is $-36.8137$ Hartree.  The random algorithm is implemented up to time $t=10.$ In our first test, we examine the MSE for various choices of the step size $\dt$. As show in Figure \ref{fig: methane}, a smaller step size generally yields a smaller error. For  short time, e.g., $t\in [0,2]$, the error exhibits a linear dependence on $\dt$. For larger $t$, this dependence seems more subtle, which can be attributed to the exponential term in the error estimate \eqref{eq: g-mse}.  

\begin{figure}[htbp]
\begin{center}
\includegraphics[scale=0.22]{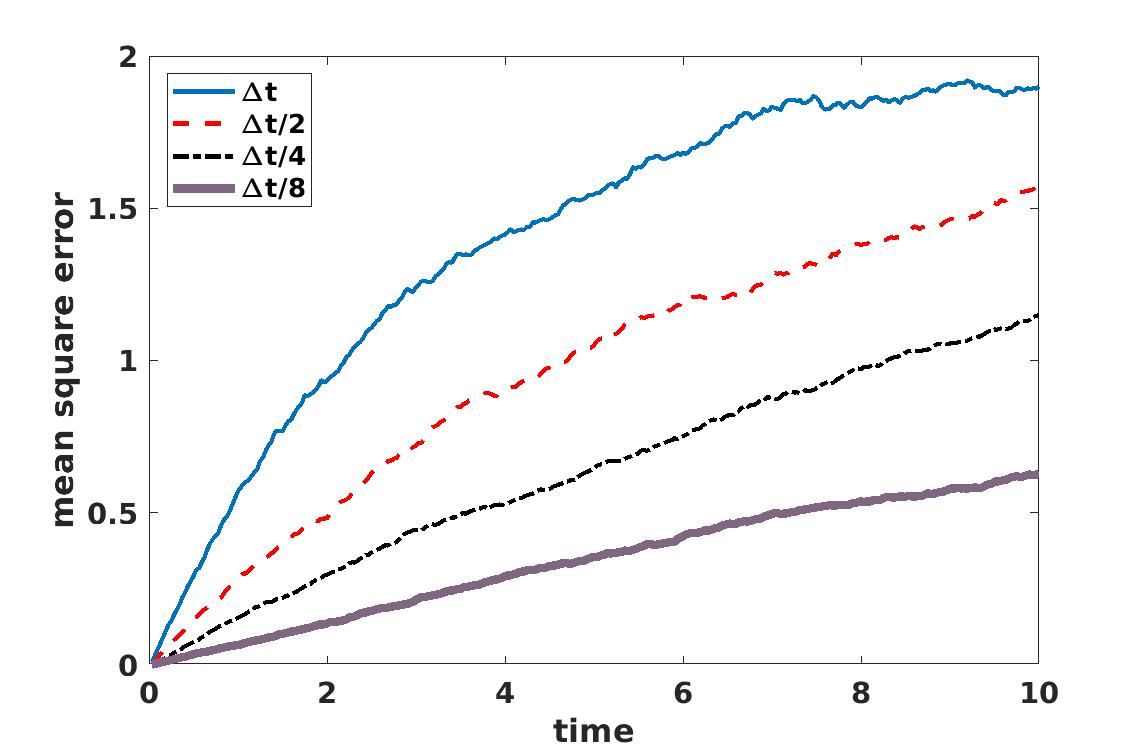}
\caption{The mean square error for the solution obtained from the random algorithm \eqref{eq: phin} computed for the methane molecule. These tests started with $\dt=0.0125$,  and then repeated by reducing the step size by a factor of 2 at each time.}
\label{fig: methane}
\end{center}
\end{figure}

To further investigate the order of the error, we show in Figure \ref{fig: msefit} the MSE at $t=1.25$, which  fits well with  a linear profile. This further verifies the first-order accuracy obtained from the analysis. To further verify the prefactor, we choose $K=1$ and $K=10.$ Interestingly, the choice $K=10$ does reduce the prefactor by a factor close to 10 as suggested by \eqref{eq: msek}. 

\begin{figure}[htbp]
\begin{center}
\includegraphics[scale=0.22]{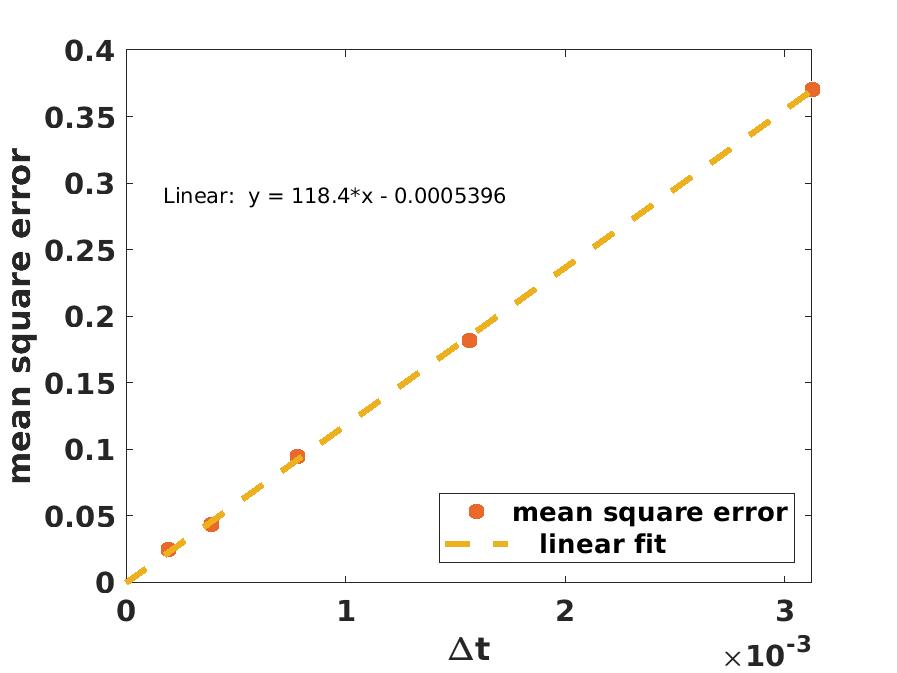}\quad
\includegraphics[scale=0.23]{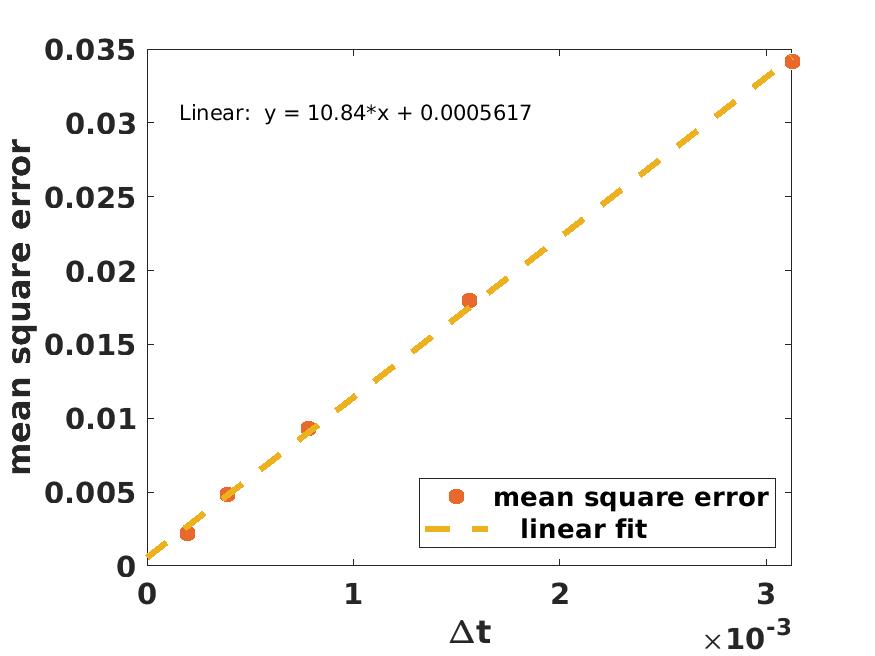}
\caption{The mean square error for various choices of the step size $\dt$. Left panel: $K=1$; Right panel: $K=10$.}
\label{fig: msefit}
\end{center}
\end{figure}

\subsection{Partially random algorithms}
\quad\\

In this section, we present results from several numerical tests  to examine the bias-variance trade-off in \eqref{eq: err-estimator}. We first choose the initial condition $\ket{\psi(0)}$
as a linear combination of the first 100 eigen modes with coefficients randomly chosen so that the observation is not specific to a particular initial state, e.g., the ground state.  

\smallskip

\noindent{\bf A Methane molecule.} We first consider a Methane molecule. The total Hamiltonian consists of 185 terms, with each Hamiltonian $h_\ell \in \mathbb{C}^{64\times 64}$. The magnitude, $c_\ell = \|h_\ell \|$, is displayed, in descending order, in Fig. \ref{fig: meth_coeff}. Clearly there are significant drops in the magnitude.
\begin{figure}[htbp]
\begin{center}
\includegraphics[scale=0.205]{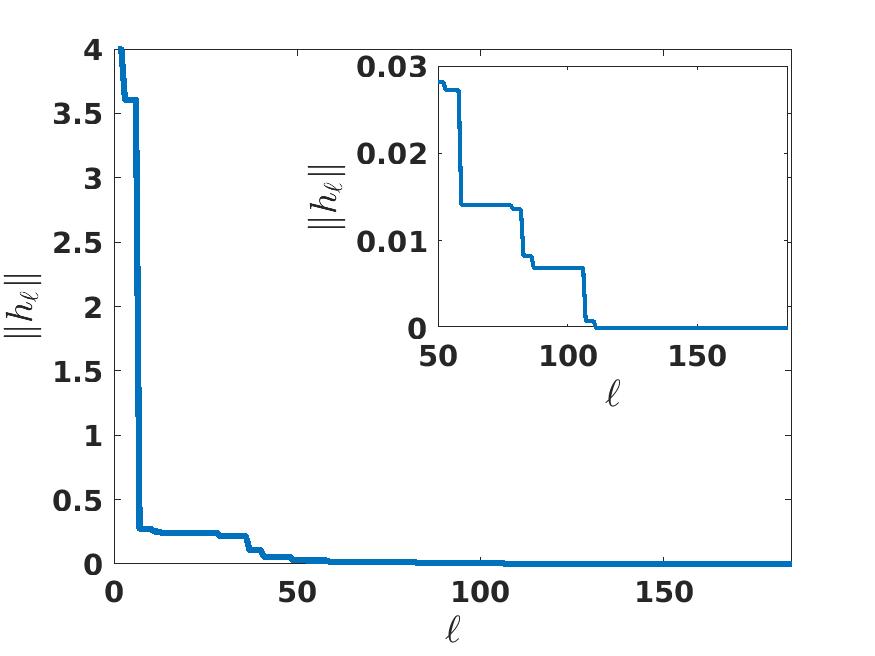}
\caption{The magnitude of the Hamiltonians for the methane molecule. The inset shows the coefficients $c_\ell $ for $\ell \geq 50.$} 
\label{fig: meth_coeff}
\end{center}
\end{figure}

To understand the MSE from the hybrid scheme, we set $N_{Gate}=2^{10},$ and ran simulations with $N_d=0, 10, 20, \cdots.$ The corresponding MSE within
the time period $[0, 4]$ is shown in the left panel of Fig. \ref{fig: methane_ng1024}. We observe that in this case, the fully random method has the largest error, and the error decreases when $N_d$ is increased. We repeat the experiment over a much larger interval $[0, 40]$, and the MSE is shown in the right panel of Fig. \ref{fig: methane_ng1024}. Due to the larger time interval, and because of the fixed number of gates, the fully deterministic algorithm is forced to use larger step size, tilting the balance between the bias and variance in \eqref{eq: err-estimator}. Remarkably, for some intermediate values of $N_d$ (around $N_d=100$), the hybrid algorithm achieves the best accuracy.  This could be linked to the abrupt change of the coefficients at that location, as shown in Fig. \ref{fig: meth_coeff}.
\begin{figure}[htbp]
\begin{center}
\includegraphics[scale=0.117]{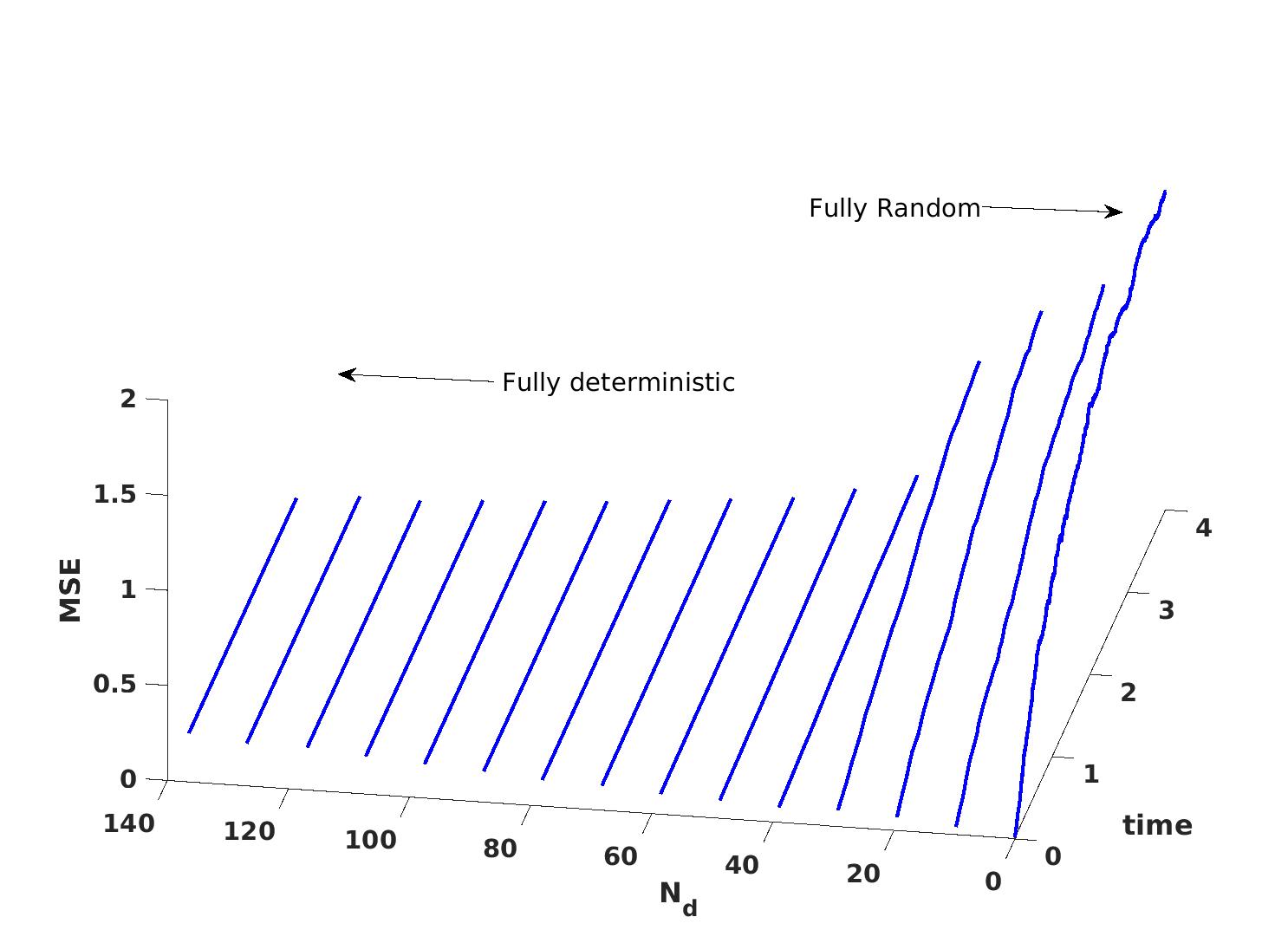}
\includegraphics[scale=0.12]{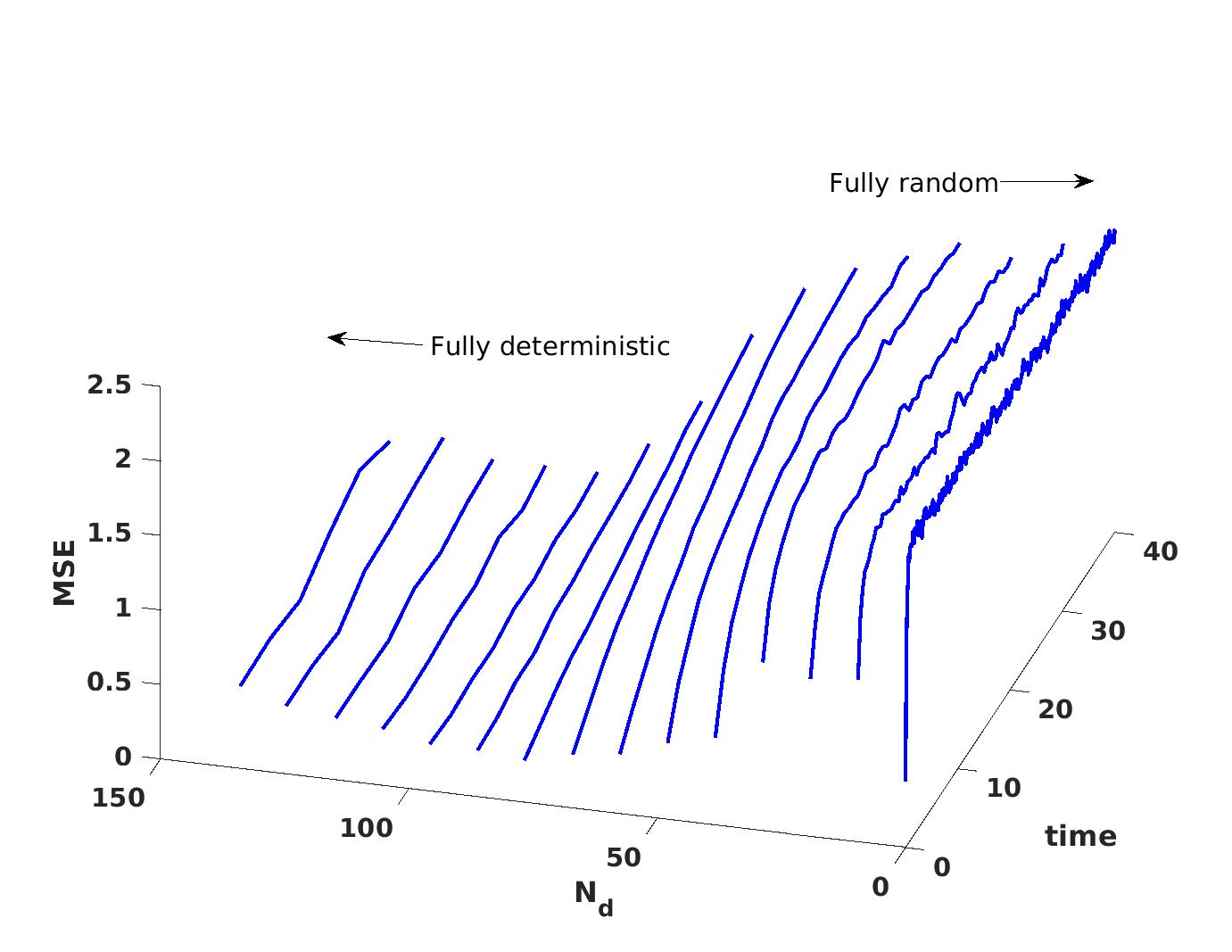}
\caption{The mean square error for the hybrid algorithm applied to a Methane molecule. Left: Error for $t \in [0, 4]$. Right: Error in the window
$[0,40]$. }
\label{fig: methane_ng1024}
\end{center}
\end{figure}

\medskip

\noindent{\bf A Dimethylamine molecule.} Next we consider a much larger molecule.  In this case, each Hamiltonian $h_\ell \in \mathbb{C}^{4096\times 4096}$, and by removing the Hamiltonians with coefficients $c_\ell < 10^{-4},$ we obtain $L=1306$ terms. The coefficients are shown in Fig. \ref{fig: dimmeth_coeff}.
\begin{figure}[htbp]
\begin{center}
\includegraphics[scale=0.225]{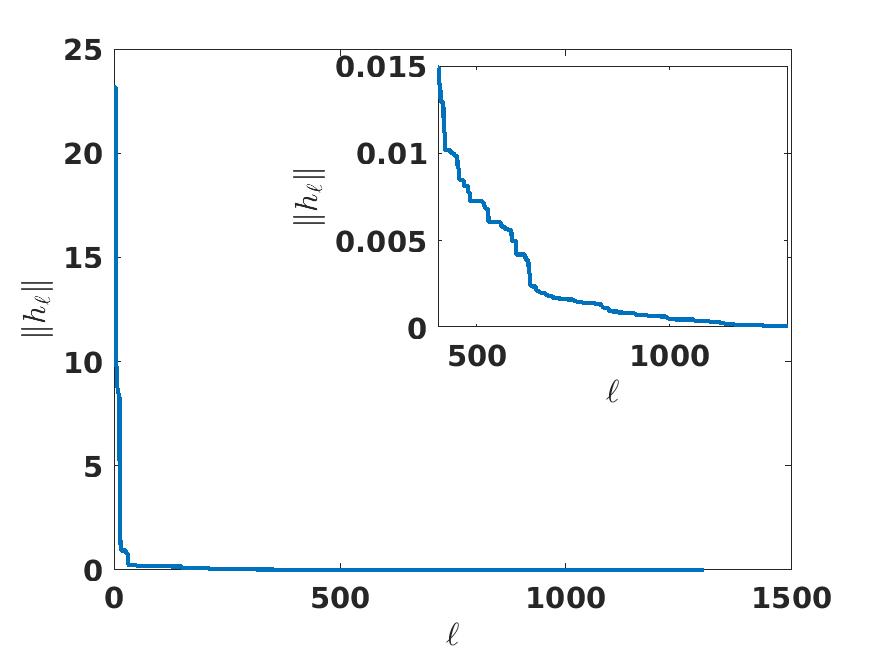}
\caption{The magnitude of the Hamiltonians for the Dimethylamine molecule. The inset figure shows the values of $c_\ell$ for $\ell \geq 500$. }
\label{fig: dimmeth_coeff}
\end{center}
\end{figure}

First we set  $N_{Gate}=2^{12},$ and ran simulations with $N_d=0, 50, 100, \cdots.$  The MSE is shown in the left panel of Fig. \ref{fig: dimeth}.
In this case, the fully random algorithm seems to give the largest error. The fully deterministic method turns out to be much better. But the partially random algorithm with $N_d=750$ gives the best accuracy. We repeated the experiment by increasing the number of gates to $N_{Gate}=2^{13}.$ In this case, the best case is when 
$N_d=650.$ See the right panel in Fig. \ref{fig: dimeth}
\begin{figure}[htbp]
\begin{center}
\includegraphics[scale=0.15]{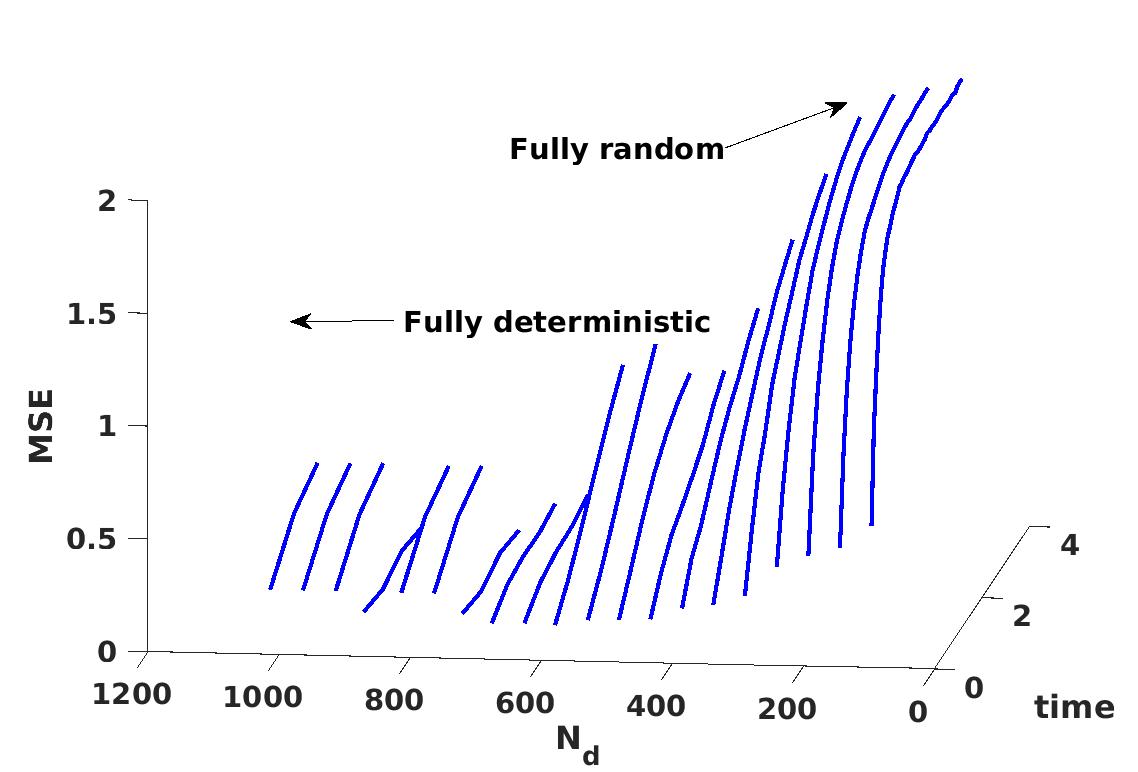}
\includegraphics[scale=0.15]{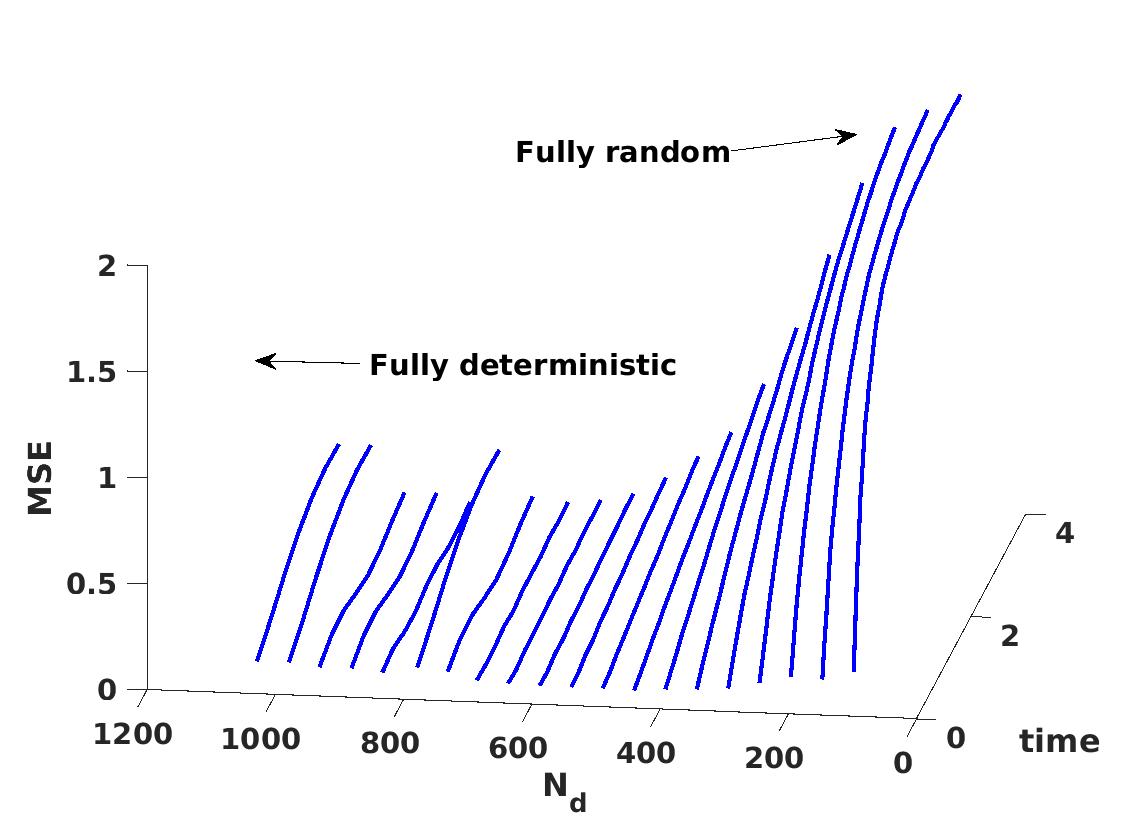}
\caption{The mean square error for the hybrid algorithm for the dimethylamine molecule. Left: $N_{Gate}=2^{12};$ Right: $N_{Gate}=2^{13}.$}
\label{fig: dimeth}
\end{center}
\end{figure}

\smallskip

\noindent{\bf Heisenberg Chains with nonlocal interactions.} 
Our third example is the Heisenberg chain model from \cite{tran2019locality}, with total Hamiltonian given by,
\begin{equation}\label{eq: hchain}
 H = \sum_{i=1}^{n-1} \sum_{j=i+1}^n \frac{1}{|j-i|^4} \big( \sigma_i^x  \sigma_j^x + \sigma_i^y  \sigma_j^y  + \sigma_i^z  \sigma_j^z \big) +
 \sum_{i=1}^n B_i \sigma_i^z.
\end{equation}
Here $n$ refers to the length of the chain. The coefficients $B_i$ are randomly chosen from $[-1,1]$ and held fixed afterwards. By treating the Pauli matrices $\sigma^x, \sigma^y$, and $\sigma^z$, separately in the Hamiltonian, the number of terms is $L=\frac32 n^2 - \frac12 n.$ The coefficients $|i-j|^{-4}$ in the Hamiltonian represents a power-law interaction \cite{tran2019locality}. The norms of the Hamiltonian terms, in descending order, is shown in Fig. \ref{fig: chain_coeff} for the two cases $n=10$ and $n=12.$ Notice that the dimension of each Hamiltonian term $h_\ell$ is given by $2^n \times 2^n.$ Clearly, there are sudden drops in the magnitude. 

\begin{figure}[htbp]
\begin{center}
\includegraphics[scale=0.205]{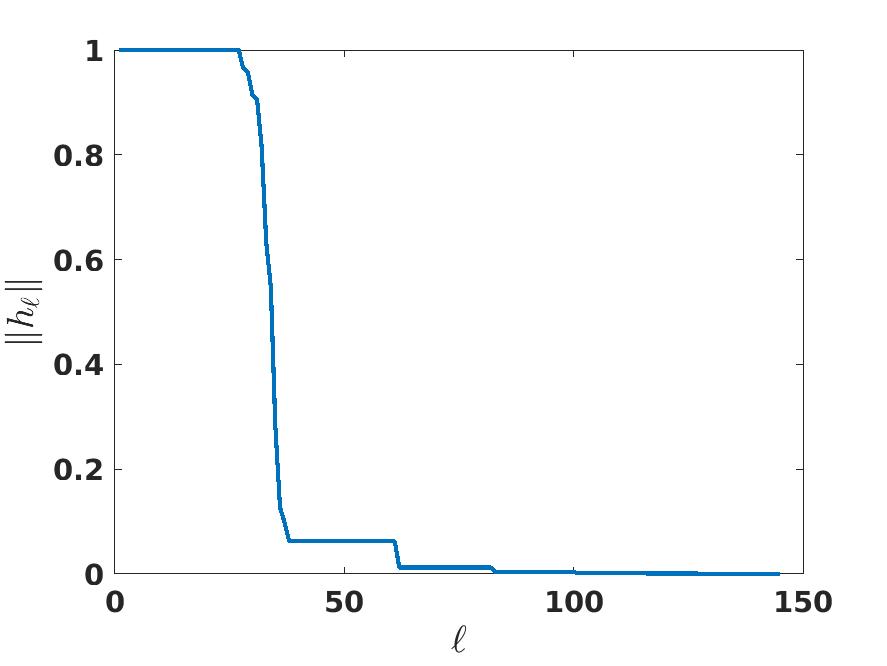}
\includegraphics[scale=0.205]{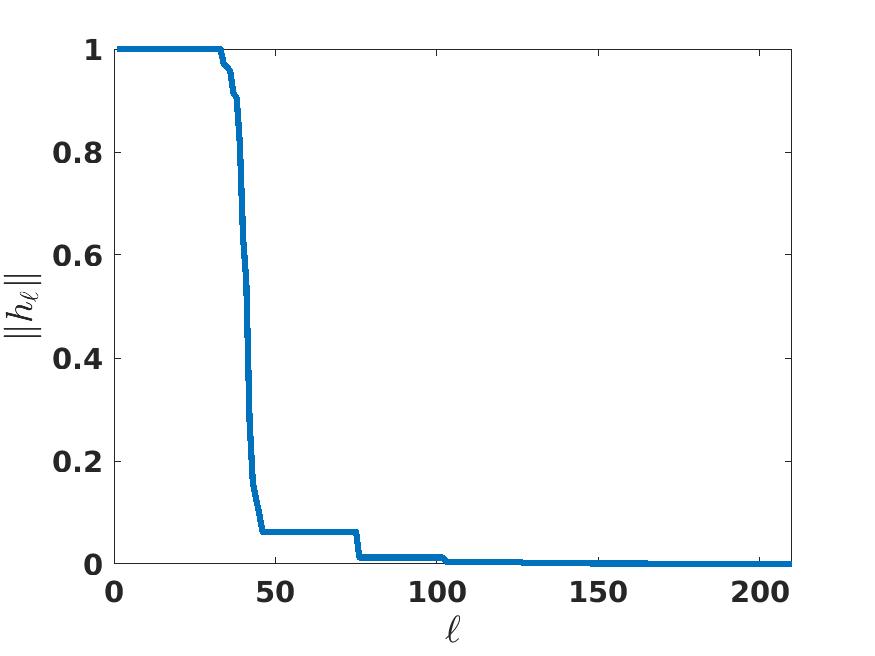}
\caption{The magnitude of the Hamiltonians for the Heisenberg chain \eqref{eq: hchain} in descending order. Left: $n=10$; Right: $n=12.$ The batch size $K=1.$}
\label{fig: chain_coeff}
\end{center}
\end{figure}

We have run the partially random algorithms with $N_d=0, 10, 20, \cdots.$  The MSE is shown in Fig. \ref{fig: chain}. In both cases, we observe that the fully deterministic and the 
fully random algorithms both have the worst accuracy.  Instead, a hybrid method with $N_d \approx 50$ yields the best accuracy. 

\begin{figure}[htbp]
\begin{center}
\includegraphics[scale=0.15]{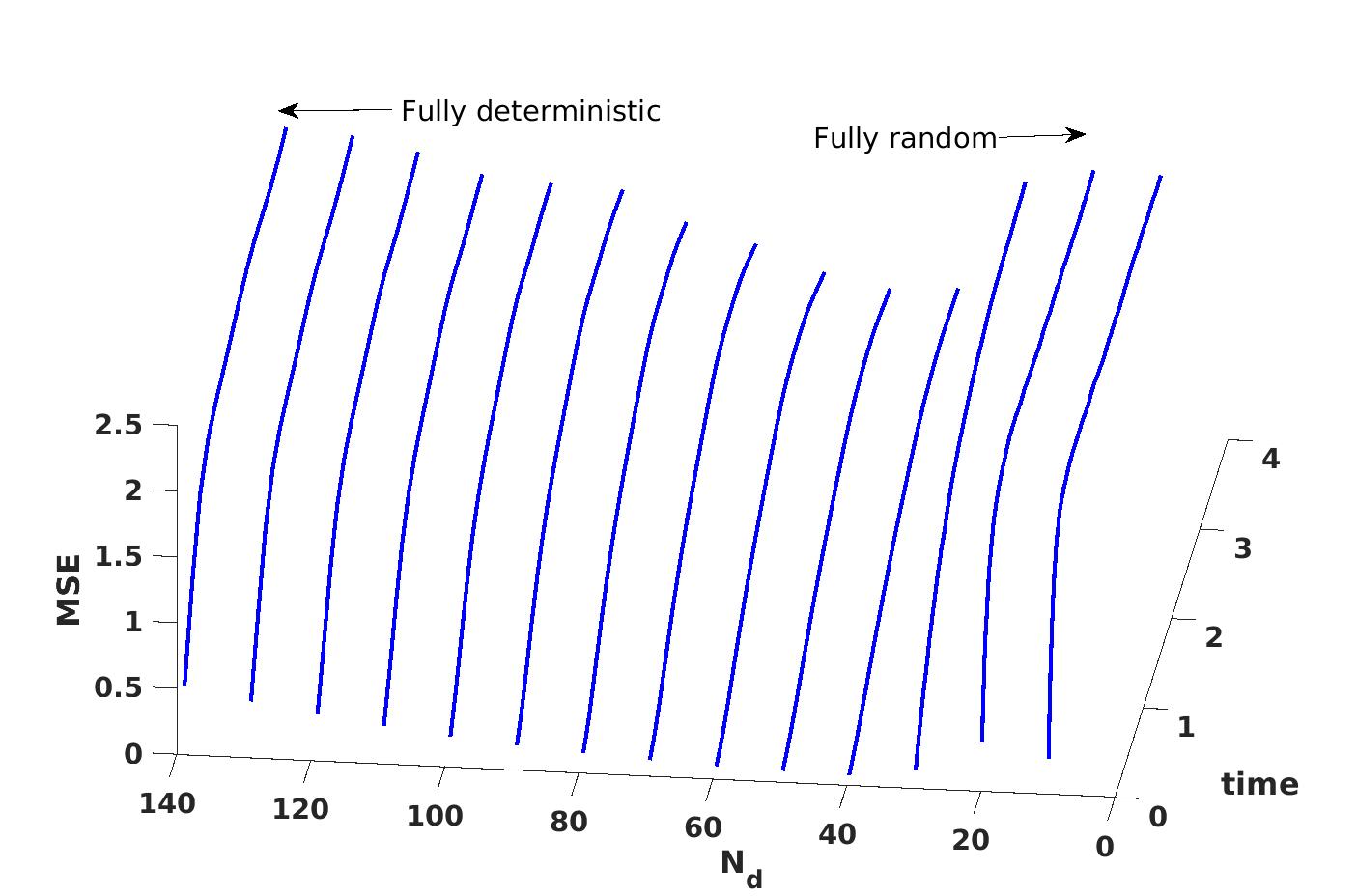}
\includegraphics[scale=0.1333]{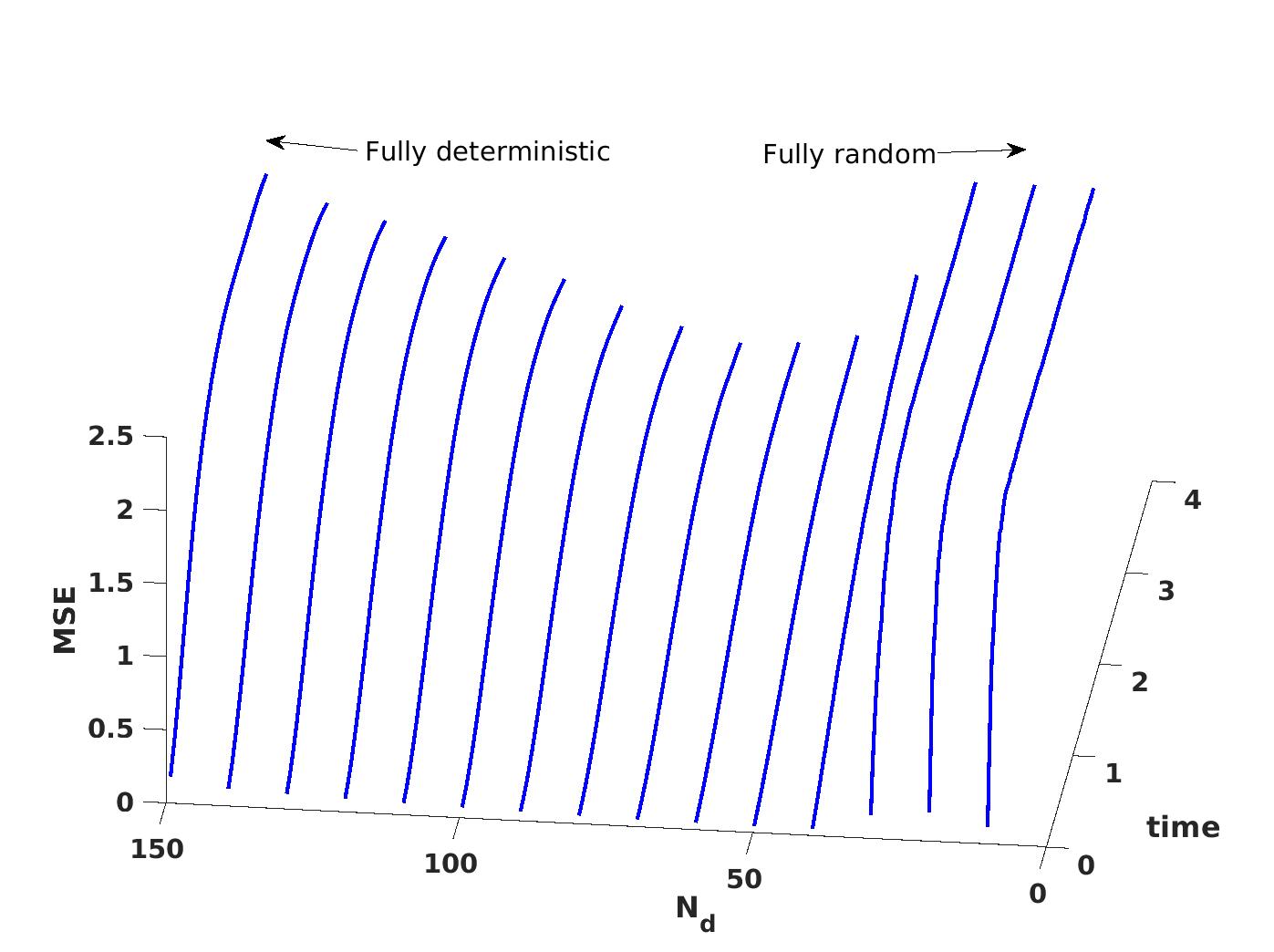}
\caption{The MSE for the hybrid algorithm applied to the model \eqref{eq: hchain}. Left: $n=10$ and $N_{Gate}= 2^{11}$; Right: $n=12$ and $N_{Gate}= 2^{12}$. The batch size $K=1.$}
\label{fig: chain}
\end{center}
\end{figure}

We now turn to the error estimator \eqref{eq: err-estimator} to understand how the error behaves with different choices of $N_d$. For this purpose, we 
consider the test case $n=12$  and $N_{Gate}= 2^{12}$ (see the right panel of Fig. \ref{fig: chain}). Specifically, we evaluate the two terms on the right hand side of \eqref{eq: err-estimator}. For each choice of $N_d,$ the variance $\Lambda$ is computed from \eqref{eq: Sigma} and  \eqref{eq: Lambda}. The probability distribution is 
defined based on the norms of $h_\ell$, from  \eqref{eq: pj1}. In addition, the constant $C$ is evaluated based on \eqref{eq: cnst}. Since different choices of $N_d$ corresponds to different choices of $\dt$, we compare the error at the same instance $t=0.2,$ by adding up the local error   \eqref{eq: err-estimator}. This will be used as an error estimator. They are shown in
Fig. \ref{fig: err}. Similar to  the direct numerical experiment in
Fig. \ref{fig: chain}, the minimum of the error can be found around $N_d=70.$

\begin{figure}[htbp]
\begin{center}
\includegraphics[scale=0.205]{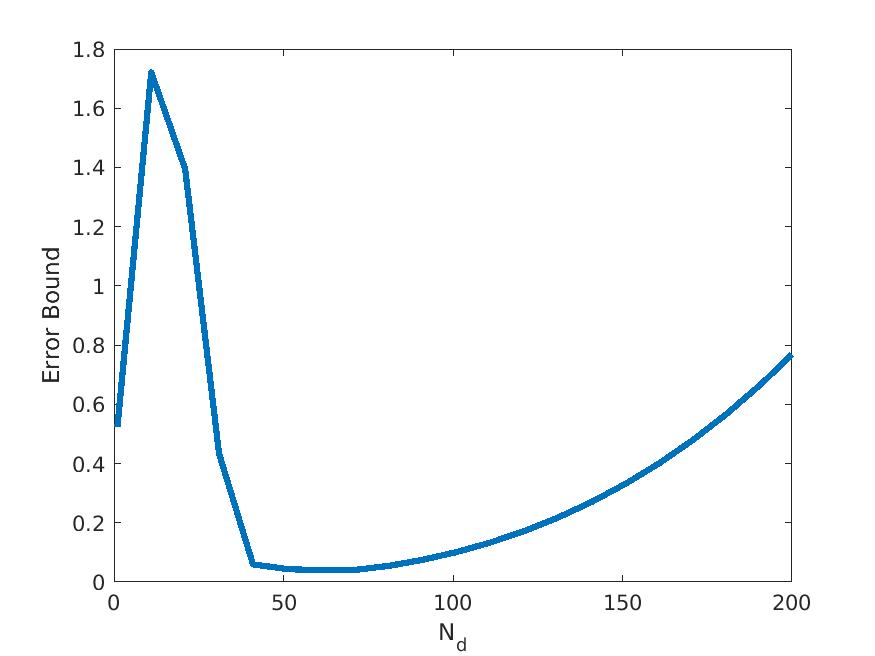}
\caption{The MSE  error estimator \eqref{eq: err-estimator} for the test case $n=12$  and $N_{Gate}= 2^{12}$. }
\label{fig: err}
\end{center}
\end{figure}

\medskip

It is clear that the simple splitting \eqref{eq: U0} for the deterministic group of Hamiltonians in $H_0$ \eqref{eq: H-0} can be replaced by higher order Trotter splitting methods. As an experiment, we replaced \eqref{eq: U0} by the symmetric-splitting method,
\begin{equation}\label{eq: U0-symm}
U_0(t) \approx \exp\big(-\frac{ith_1}2\big)  \cdots \exp \big(-\frac{ith_{N_d-1}}2\big)
\exp \big(- ith_{N_d}\big) \exp\big(-\frac{ith_{N_d-1}}2\big) \cdots \exp \big(-\frac{ith_{1}}2\big).
\end{equation}
In this case, we expect that the bias term \eqref{eq: err-estimator} will be reduced. Fig. \ref{fig: chain0-symm} shows the MSE from this experiment. In order to keep the gate number fixed, we choose the step size as, $\dt= \frac{2N_d T}{N_{Gate}}$, since at each step, \eqref{eq: U0-symm} involves $2N_d-1$ unitary operators. We observe by comparing to Fig. \ref{fig: chain} (left panel) that the choice $N_d=60$ yields the optimal accuracy, suggesting that the more accuracy treatment \eqref{eq: U0-symm} of $U_0$ tilts the balance \eqref{eq: err-estimator} more toward a deterministic method.   

\begin{figure}[htbp]
\begin{center}
\includegraphics[scale=0.15]{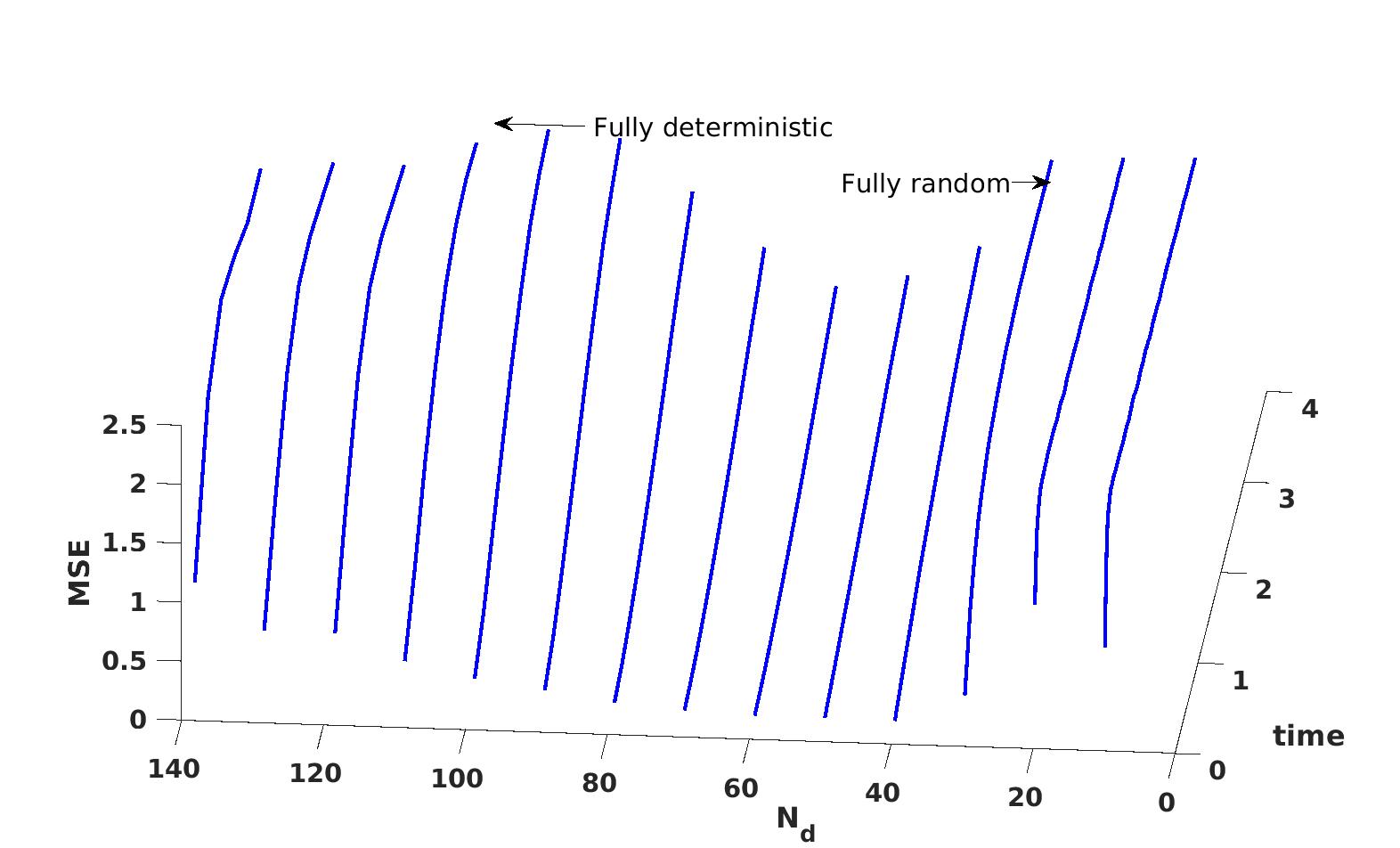}
\caption{The MSE for the hybrid algorithm applied to the model \eqref{eq: hchain} with $n=10$ and $N_{Gate}= 2^{11}$. The batch size $K=1.$  The deterministic part $U_0(t)$ is approximated by the symmetric-splitting method \eqref{eq: U0-symm}. }
\label{fig: chain0-symm}
\end{center}
\end{figure}

\medskip

In the hybrid algorithm, we have used the important sampling algorithm to pick up the random Hamiltonian terms. Here we also ran a test for the system  $n=12$ and $ N_{Gate}= 2^{12}$, where the random terms are sampled according to the uniform distribution ($p_j=1/N_r$). The MSE is plotted in Fig. \ref{fig: chain4}. Again, we observe that the hybrid method can achieve better accuracy, compared to the fully deterministic and fully random approaches.  Compared to the important sample algorithm, (Fig. \ref{fig: chain}, right panel), the MSE is slightly larger. Therefore, the important sampling approach should be preferred, whenever possible.
\begin{figure}[htbp]
\begin{center}
\includegraphics[scale=0.205]{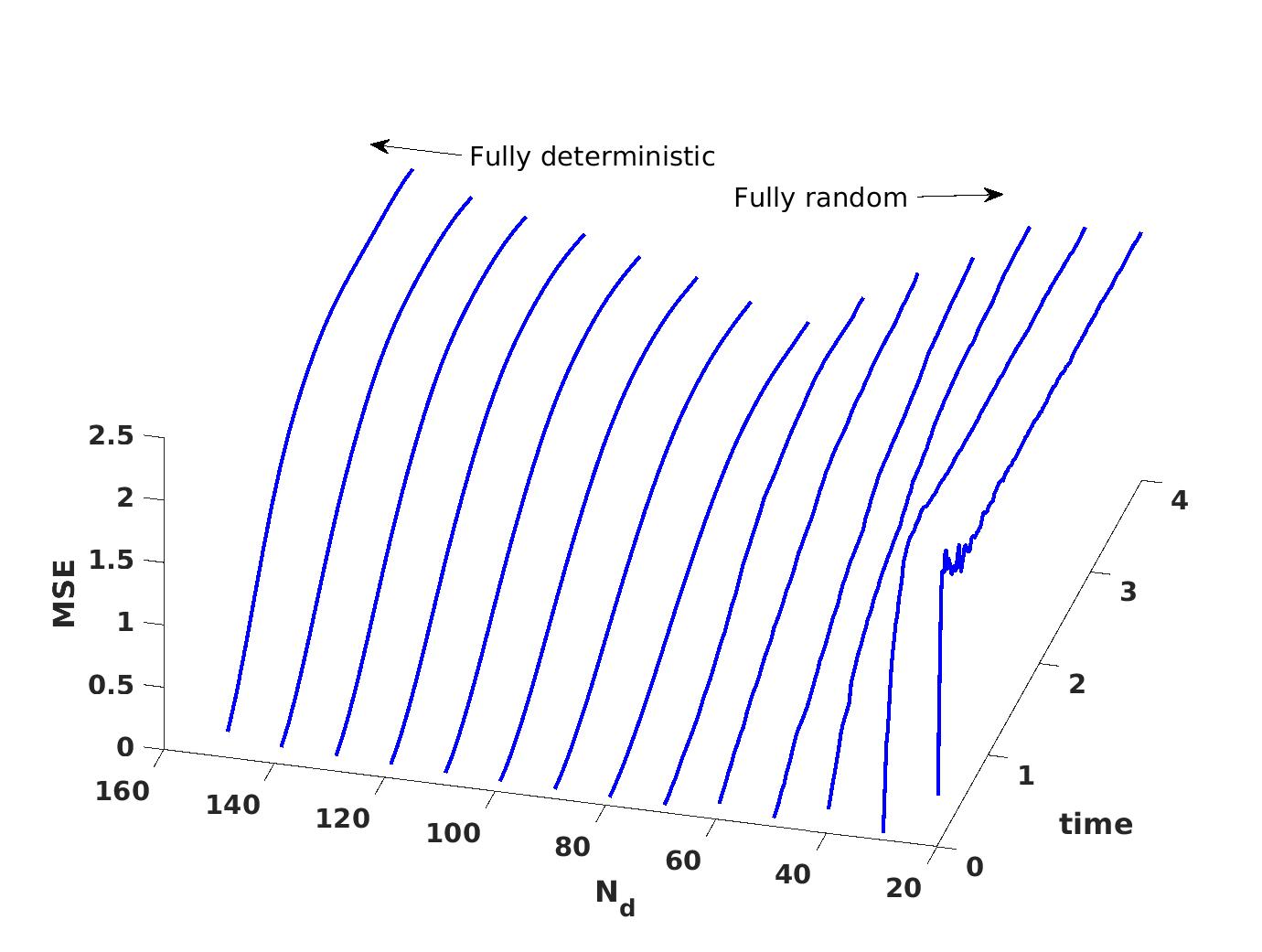}
\caption{The MSE for the hybrid algorithm, applied to the Heisenberg chain \eqref{eq: hchain} with $n=12$ and $N_{Gate}= 2^{12}$.  The random terms are selected with uniform probability.}
\label{fig: chain4}
\end{center}
\end{figure}

In all previous tests, we have chosen the batch size to be one, i.e., $K=1.$
Finally, we study the effect of using different batch sizes. For this purpose, we fix $n=10$ and $ N_{Gate}= 2^{11}$. Then we set $K=4$ and $K=8$. The resulting
MSE is shown in Fig. \ref{fig: chainK}. We observe that they have similar performance. 

\begin{figure}[htbp]
\begin{center}
\includegraphics[scale=0.125]{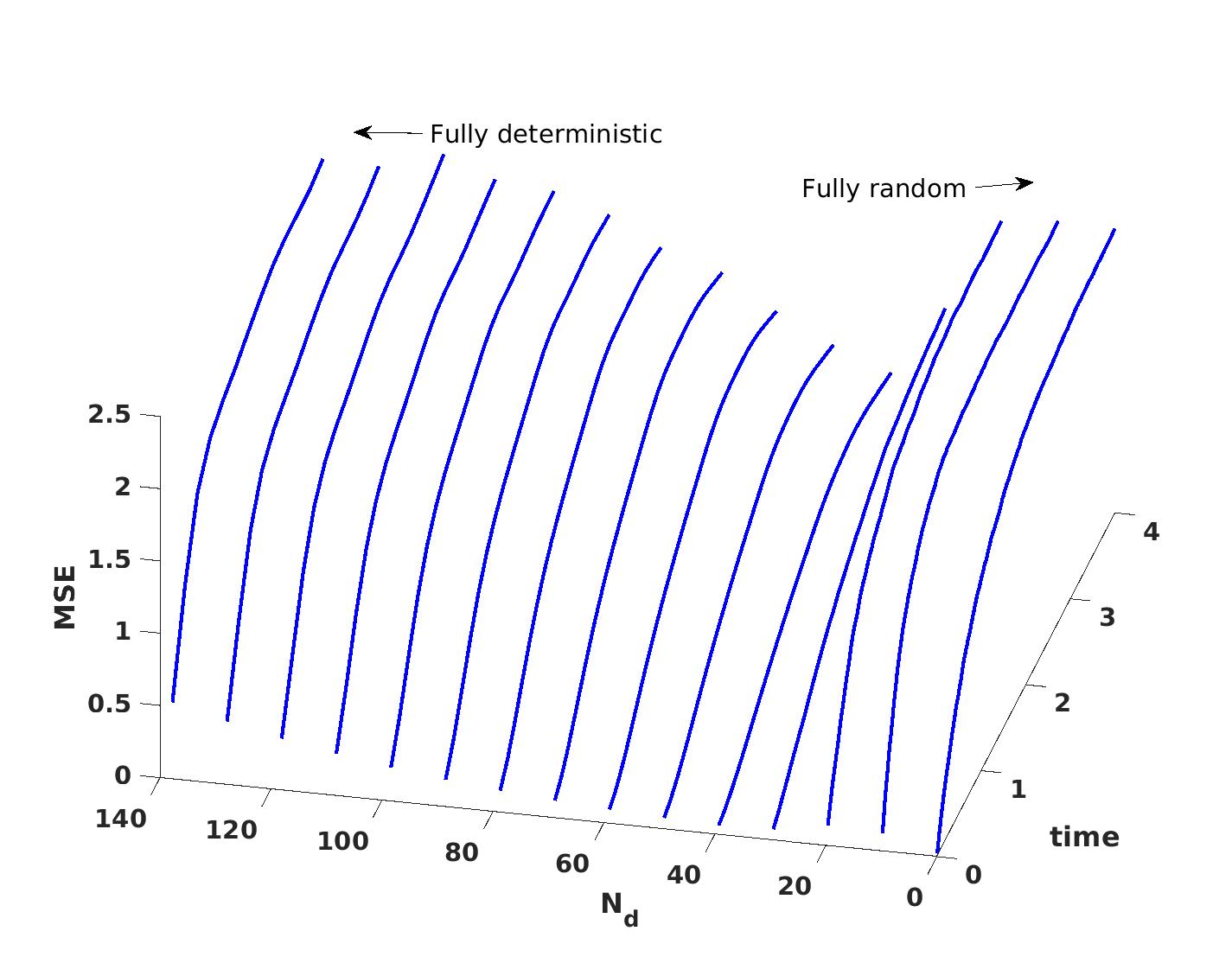}
\includegraphics[scale=0.1333]{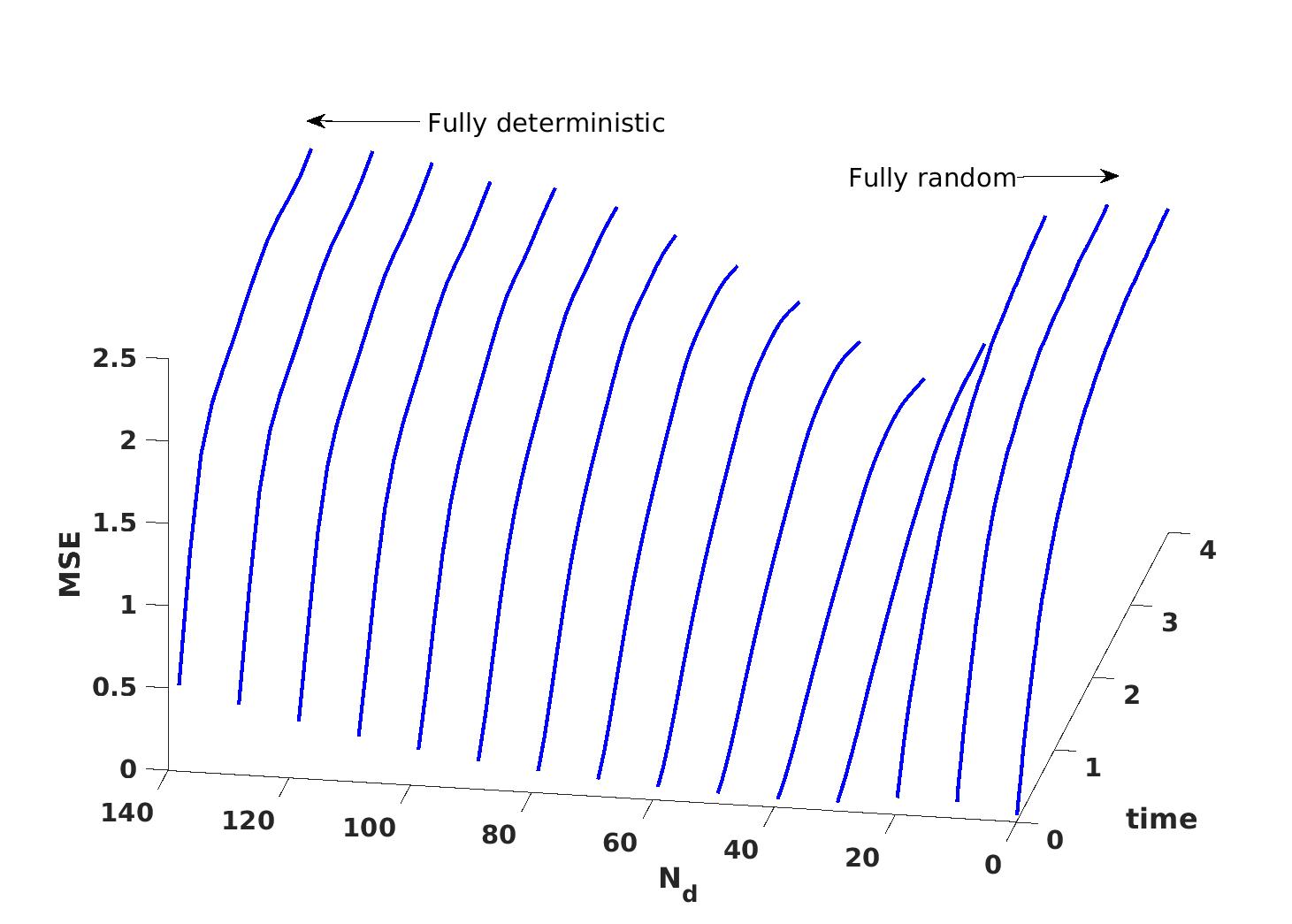}
\caption{The MSE for the hybrid algorithm applied to the Heisenberg chain \eqref{eq: hchain} with $n=10$ and  $N_{Gate}= 2^{11}$. Left: $K=4$; Right: $K=8$. }
\label{fig: chainK}
\end{center}
\end{figure}

\section{Summary}
In this paper, we proposed a Trotter approximation algorithm, where the total Hamiltonian is divided into two groups that will be treated in completely different
manners. The first group, consisting of Hamiltonian terms with larger amplitude, is implemented using standard Trotter splitting. For the terms in the other group, they are sampled randomly. We showed that the error from such a hybrid algorithm  has a typical bias-variance trade-off. Numerous examples suggest that by balancing the two types of error, one can obtain an optimal accuracy.

We have obtained the asymptotic form of \eqref{eq: err-estimator} with respective to the step size $\dt$. But the prefactors may not be available in advance. An empirical observation is that the optimal cut-off, i.e., $N_d$, is often where the coefficients of the Hamiltonians have a drop in magnitude. A more systematic approach to determine the partition is a remaining challenge.   

Another important question is how such error will be passed down to the following steps, e.g., phase estimation and solutions of linear systems. For example, it was shown in \cite{babbush2015chemical} that the typical error bounds of Trotter-Suzuki formulas can be overestimations  if only the properties of the ground state are desired. 

\appendix
\section{The proof of Proposition \ref{eq: mean-1step}}
\begin{proof}
Recall that $ \ket{\psi(t)}:= \exp  \big( -i t H  \big) \ket{\psi(0)} $ is the exact solution.
We will split the error as follows,
\begin{equation}
\begin{aligned}
  \left\| \ket{\psi_1}  - \mathbb{E}[\ket{\phi_1} ] \right\|  & \leq   \left\| \ket{\psi(\dt)}  -   \exp \big( -i \dt H_0\big) \exp \big( - i \dt H_1 \big)   \ket{\psi_0}   \right\| \\
 & + \left\| \exp \big( -i \dt H_0\big) \exp \big( - i \dt H_1 \big)   \ket{\psi_0}    - \mathbb{E}[\ket{\phi(\dt)} ] \right\|.
\end{aligned}
\end{equation}

The first term can be estimated by \eqref{eq: ebound'} from \cref{eq: split-error}. For the second term, it is enough to consider  the difference, denoted by $\ket{\theta(t)},$ of the following wave functions,
\begin{equation}
\ket{\theta(t)} = \exp \big( -i t H_1  \big)  |\psi_0\rangle - \mathbb{E}\left[ \exp \big( -i t H_1 - i t \delta\!H \big)\right]  |\psi_0\rangle
\end{equation}

Let $\ket{\phi(t)}= \big( -i t H_1 - i t \delta\!H \big)  |\psi_0\rangle$, and  notice that
   $$\dsp \frac{d}{dt} \mathbb{E}[\ket{\phi(\dt)} ] =  \mathbb{E}[   -i (H_1+\delta\!H)  \ket{\phi(\dt)} ].  $$   
 Therefore, the error term term  $\ket{\theta(t)}$ satisfies the equation,
  \begin{equation}
  \frac{d}{dt} \ket{ \theta(t)}  =  -i H_1  \ket{\theta(t)} -i \mathbb{E} [  \delta\!H  \ket{\phi(t)}].
\end{equation}

Using the variation-of-constant formula, and using the fact that $ \mathbb{E} [  \delta\!H]=0,$ we can write,
\[
\begin{aligned}
 \ket{\theta(\dt)} =& -i \int_0^{\dt}  U_1(\dt -t) \mathbb{E} \left[  \delta\!H  \ket{\phi(t)} \right] dt\\
		=  & -i \int_0^{\dt} U_1(\dt -t) \mathbb{E} \left[  \delta\!H ( \ket{\phi(t)} - \ket{\phi(0)} )  \right] dt\\
		= & - \int_0^{\dt} U_1(\dt -t) \mathbb{E} \left[  \delta\!H (H_1 + \dH) \int_0^t \ket{\phi(\tau)} \right] d\tau dt
\end{aligned}
\]

Here $U_1(t)=\exp( -i t H_1).$ In light of the fact that $\| \ket{\phi}\| =1,$ one can take the norms and arrive at,
\[
\|  \ket{\chi(\dt)}  \| \le  \int_0^{\dt} \int_0^t \mathbb{E} \left[  \|\delta\!H (H_1 + \dH)\| \right]^{1/2} d\tau dt,\]
which yields the second term in the estimate \eqref{eq: err-mean}.

\end{proof}

\section*{Acknowlegement}
Jin's research was supported by Innovation Program of Shanghai Municipal Education Commission (No. 2021-01-07-00-02-E00087). Li's research has been supported by NSF DMS-1953120 and DMS-2111221.

\bibliographystyle{plain}
\bibliography{qcomp,errorestimate,trotter,rbm}

\end{document}